\newcommand{\of}[1]{\left( {#1}\right)}
\newtheorem{theorem}{Theorem}
\newtheorem{lemma}{Lemma}
\newtheorem{cor}{Corollary}
\begin{document}

\title{Sequencing the Entangled DNA of Fractional Quantum Hall Fluids}
\date{February 2022}

\author{Joseph R. Cruise}
\email{c.joseph@wustl.edu}
\affiliation{Department of Physics, Washington University in St. Louis, St. Louis, Missouri 63130, USA}

\author{Alexander Seidel}
\email{seidel@physics.wustl.edu}
\affiliation{Department of Physics, 
TFK, Technische Universit\"at M\"unchen, James-Franck-Straße 1, D-85748 Garching, Germany}
\affiliation{Department of Physics, Washington University, St. Louis, Missouri 63130, USA}

\begin{abstract}
We introduce and prove the ``root theorem'', which establishes a  condition for families of operators to annihilate all root states associated with zero modes of a given positive semi-definite $k$-body Hamiltonian chosen from a large class.
This class is motivated by fractional quantum Hall and related problems, and features generally long-ranged, one-dimensional, dipole-conserving terms.
Our theorem streamlines analysis of zero-modes in contexts where ``generalized'' or ``entangled'' Pauli principles apply. One major application of the theorem is to parent Hamiltonians for mixed Landau-level wave functions, such as unprojected composite fermion or parton-like states that were recently discussed in the literature, where it is difficult to rigorously establish a complete set of zero modes with traditional polynomial techniques. As a simple application we show that a modified $V_1$ pseudo-potential,  obtained via retention of only half the terms, stabilizes the $\nu=1/2$ Tao-Thouless state as the unique densest ground state.
\end{abstract}

\maketitle

\section{Introduction}

Recent decades have greatly added to our understanding that quantum phases of matter are not comprehensively described by the conventional, symmetry-based orders described by Landau\cite{Landau:1937obd}. Instead,
the description of quantum orders is a 
paramount challenge in the study of complex interacting many-body systems. While we may still be far from a complete understanding of all such possible orders, a time-tested approach toward adding to this understanding is identifying a ``correct'' quantum-many body wave function describing a given phenomenology, and extracting from it the essential features responsible for its universal behavior. Perhaps the first example of this approach is the BCS trial wave function for superconductivity\cite{BCS}, even though the insight of its Landau-paradigm defying character came somewhat later\cite{HANSSON2004497}.
Interacting  quantum many-body systems in one spatial dimension represent another important niche where this 
approach has been fruitful. While integrability provides a powerful tool here---albeit one peculiar to the one-dimensional situation---important general principles, e.g., regarding gapped, symmetry protected, topological phases have been discovered through explicit  construction of classes of many-body wave functions. The original case in point here is the seminal AKLT model\cite{Affleck1987}, which sparked vast efforts of generalization, and generation of methodology for classification via matrix product states, (e.g., \cite{Fannes1992b, Pollmann2010, Fidkowski2011, XChen2011}). For gapped, partially gapped, and gappless one-dimensional quantum-systems, another method reliant on the identification of correct trial wave functions is that of the ``factorization of the wave function''\cite{Ogata1990, Seidel2004a, Ribeiro2006, Kruis2004}. Indeed, in this context, there emerges some notion of ``squeezing''\cite{Kruis2004} that, perhaps not coincidentally, may be a close cousin of a similar notion that will be of central importance in the present work.

In this paper, our foremost concern will be the fractional quantum Hall effect and models for related phenomena. Here, it was of course Laughlin who pioneered the technique of writing down unique trial wave functions that capture the underlying physics. This has been key to the discovery of a plethora of quantum orders in the fractional quantum Hall regime. For many holomorphic, single-component lowest-Landau-level wave functions, \`a la Laughlin, a highly consistent framework exists to link those wave functions to field theory\cite{Moore1991}, in particular, the effective edge theory. 
The connection with universal physics at the edge can, however, be made in another very convincing way that involves a certain type of parent Hamiltonian. The prototype for this kind of Hamiltonian is the $V_1$ Haldane pseudopotential (or its bosonic counterpart)\cite{Haldane1983}, with an equivalent Trugman-Kivelson form\cite{Trugman1985}. This Hamiltonian stabilizes the Laughlin state as its unique lowest-angular-momentum zero-energy eigenstate (zero mode). At higher angular momenta, however, the zero-mode space is increasingly degenerate. Indeed, the counting of linearly-independent zero modes at a given angular momentum relative to the 
incompressible fluid exactly reproduces the mode counting in the edge conformal field theory. A number of  other such parent Hamiltonians exist that conform to such a ``zero mode paradigm''\cite{Haldane1983, Halperin1983, Trugman1985,Jain1990,Rezayi1991, Greiter1992, Read1996, Ardonne2001, Simon2007, Chen2017, Bandyopadhyay2018, Bandyopadhyay2019, Greiter2021,  2022arXiv220409684T}.
Originally, this was
especially the case for holomorphic, lowest Landau-level wave functions. In the presence of an appropriate confining potential that lifts the zero-mode degeneracy, and assuming a bulk gap, this rigorously establishes the edge physics in a microscopic model. Due to bulk-edge correspondence in these systems, one may argue that this is basically as good as knowing the bulk topological quantum field theory. Indeed, recent probes of the statistics of fundamental particles make essential use of the physics at the edge\cite{Nakamura2020}, as do other probes, such as momentum-resolved tunneling\cite{Huber2005, Seidel2009a, Wang2010}. 

Recently, there has been much renewed interest in wave functions that are not holomorphic, at least not without projection to the lowest Landau level.
The best known examples are the Jain composite fermion states\cite{Jain1989a}, but also beyond that a larger class of so-called ``parton states'', which historically were among the first non-Abelian fractional quantum Hall states to be written down\cite{Jain1989, Jain1990a, Wen1992}.  
The recently revived interest\cite{ Wu2017, Balram2018, Balram2018a, Bandyopadhyay2018, Balram2021, 2022arXiv220409684T} in parton states stems, in part, from the arrival of new platforms to harbor the fractional quantum Hall effect, such as graphene in its multi-layer varieties.
Not being described by holomorphic polynomials, the general framework of Ref. \cite{Moore1991} does not immediately apply, and indeed, one can argue that the identification of ``zero-mode paradigm conforming'' parent Hamiltonians is the most direct way to establish universal physics from microscopic principles. At the same time, where such Hamiltonians exist, their zero mode spaces are considerably harder to study than for Hamiltonians stabilizing holomorphic lowest Landau-level wave functions. This is so because techniques surrounding symmetric polynomials, which have, for example, been successfully used in the case of the Laughlin\cite{Stone1990} and several other\cite{Read1996, Milovanovic1996,Ardonne2001} quantum Hall states to facilitate such counting, do not apply in this case.
For reasons of this kind, parent Hamiltonians (with proper zero-mode counting) for Jain composite fermion states 
have only recently been constructed\cite{Bandyopadhyay2019}. On the other hand, non-Abelian parton states can, in many cases,  be characterized by analytic-clustering conditions that lead to natural parent Hamiltonians. However, the rigorous full characterization of the zero-mode spaces of such Hamitonians can, again, not be carried out with the polynomial techniques--or any standard device of commutative algebra--that are characteristic of the lowest Landau level. Instead, radically different methods for zero-mode characterization and counting have been developed\cite{Bandyopadhyay2018, Chen2018b, Bandyopadhyay2019, 2022arXiv220409684T} that put less emphasis on analytic wave functions and larger emphasis on second quantization, breaking with the tradition of the field. 
The key ingredient to these methods is the generalization of squeezing principles that had hitherto been associated with the lowest Landau level, especially Jack-polynomial wave functions\cite{Bernevig2008, Bernevig2008a}. These generalizations, leading to notions of ``root states'' governed by Pauli-like principles, are then turned into devices for the rigorous characterization of {\em complete} sets of zero modes. The resulting methods certainly also work for parent Hamiltonians of holomorphic wave functions, but moreover can be applied to non-holomorphic, mixed-Landau-level wave functions and their parent Hamiltonians as well, where we know of no alternative.

The notion of a root state will be central to our work.
We give a technical defintion in the following section, where we formalize generalizations for multi-component/multi-Landau-level states of the recent literature. For general intuition, let us appeal to the well-known root state of the $\nu=1/3$ Laughlin state. This can be represented by a string of consecutive numbers of a simple pattern,  $100100100100\dotsc$, see also equation \eqref{eq:pattern} below.
There are several ways to uniquely assign simple strings of patterns to the set of zero modes of certain Hamiltonians, including taking the thin torus/thin cylinder limit. In this work, we will emphasize squeezing principles. The aforementioned simple patterns do efficiently encode all of the edge physics, thus, as argued above, all of the universal physics. Indeed, this includes braiding statistics, which can be extracted from these patterns rather systematically and directly~\cite{Flavin2011}, as was recently demonstrated also for mixed-Landau level states ~\cite{Bandyopadhyay2018,  2022arXiv220409684T}. For these reasons, we think of simple patterns associated to root states as the ``DNA'' of the underlying quantum Hall state. Though many as aspects of the single component/single Landau level case carry over to the multi-component/multi-Landau level case, the structure of the root states becomes more intricate in the latter case as roots states, or DNA, will retain simple forms of entanglement. 
This makes the study of such root structure a much more interesting and generally more demanding problem.

The main purpose of this paper is to prove a theorem that we hope will streamline root-state analyses of this kind, with applications for composite fermion and parton states, as well as possibly other, even more general holomorphic and non-holomorphic many-body trial wave functions in mind.
It will simplify and unify the application of the underlying principles, and allow for generalization in several directions. In particular, the theorem will readily cover $k$-body interactions and generalizations of Pauli-like principles, whereas the non-holomorphic examples referenced above are all $2$-body cases. It will also allow questions concerning zero mode characterization of certain classes of Hamiltonians to be addressed without undue focus on case-by-case adaption of proof techniques.

The remainder of this paper is organized as follows.
In section \ref{statement} we establish the setting for our theorem and subsequently state it. In section \ref{applications} we first illustrate simple but typical use cases for our theorem, which we hope will convince the reader of its general utility.
In section \ref{proof} we prove the theorem. We conclude in section \ref{conclusion}.

\section{Statement of the problem\label{statement}}

Consider a (bosonic or fermionic) Fock space with a single particle basis
\begin{equation}\label{eq:SP} c^\dagger_{j,r} \ket{0}
\end{equation}
for $j\in \mathbb{N}$ and $r \in \{ 0,\dots,n-1\}$ for some $n \in \mathbb{N}$, where $\ket{0}$ is the vacuum. 
Pictorially, we may think of this single particle space as the quasi-one-dimensional ladder represented in figure \ref{Fig1a}. This is the setting appropriate to the physics of a single Landau level in disk geometry with internal spin or valley degrees of freedom, where $j$ represents the angular momentum of the orbital, and $r$ the internal index. 
\begin{figure}
    \centering
    \begin{tikzpicture}
    \filldraw[black] (0,0) circle (2pt);
    \filldraw[black] (1,0) circle (2pt);
    \filldraw[black] (2,0) circle (2pt);
    \filldraw[black] (3,0) circle (2pt);
    \filldraw[black] (4,0) circle (2pt);
    \filldraw[black] (0,1) circle (2pt);
    \filldraw[black] (1,1) circle (2pt);
    \filldraw[black] (2,1) circle (2pt);
    \filldraw[black] (3,1) circle (2pt);
    \filldraw[black] (4,1) circle (2pt);
    \filldraw[black] (0,2) circle (2pt);
    \filldraw[black] (1,2) circle (2pt);
    \filldraw[black] (2,2) circle (2pt);
    \filldraw[black] (3,2) circle (2pt);
    \filldraw[black] (4,2) circle (2pt);
    
    \draw[->] (-0.05,-0.8) -- (6.05,-0.8);
    \draw[->] (-0.7,-0.1) -- (-0.7,2.1);
    
    \node at (3,-1.4) {\large Angular momentum $j$};
    \node[rotate = 90] at (-1.3,1) {\large Internal index $r$};
    \node at (5.5,1) {\Huge \dots};
    
    \end{tikzpicture}
    \caption{Half infinite ladder of single particle quantum Hall states with simple $j\geq 0$ boundary condition, as appropriate, e.g., for a single Landau level in the disk geometry with an additional spin or valley index, or multiple Landau levels in the half-infinite cylinder.}
    \label{Fig1a}
\end{figure}
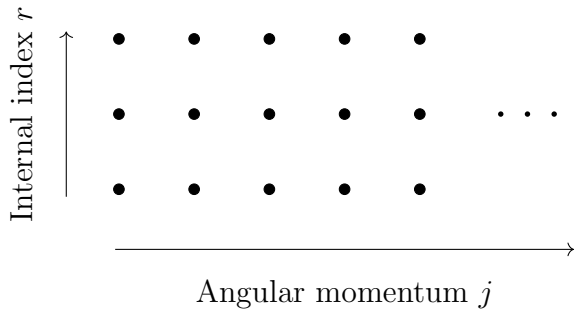
It is equally appropriate to a system confined to a finite number of multiple Landau levels. 
This is immediately natural for a half-infinite cylinder geometry with multiple Landau levels. Alternatively, in disk geometry, $j$ may represent the sum of the angular momentum and the Landau-level index.
This would likewise reproduce the ``shape'' of the half-infinite ladder of figure  \ref{Fig1a} (see, e.g., Ref. \cite{Bandyopadhyay2019}).
Independent of context, we will refer to $j$ as the single-particle angular momentum in this paper.
In addition, slightly different boundary conditions  may be appropriate in disk geometry:
many works in the literature, to which our findings will also directly apply, represent $n$ Landau levels as a half-infinite ladder system via the boundary condition shown in figure  \ref{Fig1b}. In this case, $j$ still represents \textit{bona fide} angular momentum and runs over all integers $j>-n$, and we enforce the constraint $r\geq -j$, leading to the ``shape'' shown in the figure. Neither the inclusion of negative $j$ 
%(including infinitely many, as appropriate in cylindrical geometry) 
nor that of additional constraints of the type mentioned (including a possible right boundary as appropriate in spherical geometry) will cause any problems in the following, so we may restrict the setting to that of figure  \ref{Fig1a} without loss of generality, when definiteness is required.
\begin{figure}
    \centering
    \begin{tikzpicture}
    \filldraw[black] (2,0) circle (2pt);
    \filldraw[black] (3,0) circle (2pt);
    \filldraw[black] (4,0) circle (2pt) ;
    \filldraw[black] (1,1) circle (2pt);
    \filldraw[black] (2,1) circle (2pt);
    \filldraw[black] (3,1) circle (2pt);
    \filldraw[black] (4,1) circle (2pt);
    \filldraw[black] (0,2) circle (2pt);
    \filldraw[black] (1,2) circle (2pt);
    \filldraw[black] (2,2) circle (2pt);
    \filldraw[black] (3,2) circle (2pt);
    \filldraw[black] (4,2) circle (2pt);
    
    \draw[->] (-0.05,-0.8) -- (6.05,-0.8);
    \draw[->] (-0.7,-0.1) -- (-0.7,2.1);
    
    \node at (3,-1.4) {\large Angular momentum $j$};
    \node[rotate = 90] at (-1.3,1) {\large Internal index $r$};
    \node at (5.5,1) {\Huge \dots};
    
    \end{tikzpicture}
    \caption{Half infinite ladder of quantum Hall states for a system of multiple Landau levels in the disk geometry, subject to the constraint $r \geq -j$. }
    \label{Fig1b}
\end{figure}
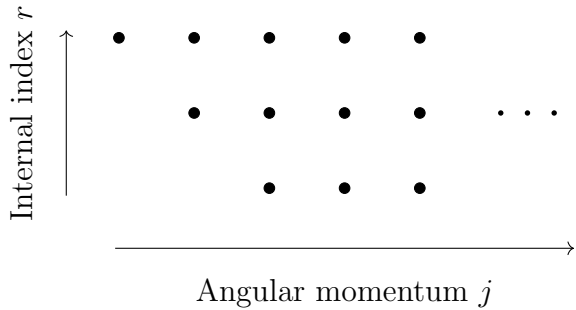

Our focus is on positive (semi-)definite $k$-body Hamiltonians of the form
%change this index to something other than l
\begin{equation}\label{eq:Hl} H =  \sum_{p=1}^k H_p =\sum_{p=1}^k\of{ \sum_{q=1}^{m_p}  \sum_J \of{Q_J^{p,q}}^\dagger Q_J^{p,q}}, \end{equation}
where 
\begin{equation} Q_J^{p,q} = \sum_{j_1,\dots,j_p}\sum_{r_1,\dots,r_p} \delta_{j_1+\dots+j_p,J} \eta^{p,q}_{j_1,\dots,j_p,r_1,\dots,r_p} c_{j_1,r_1}\dots c_{j_p,r_p}\;,\label{eq:Q} \end{equation} 
with $q$ an additional label that runs over all $m_p$ $p$-body terms present in the Hamiltonian.
This is the form one generally obtains for pseudo-potential expansions of positive (semi-definite) interactions in second quantization. Such expansions were originally introduced in first quantization by Haldane\cite{Haldane1983}, with the $2$-body, single Landau-level case in mind, but have been extensively generalized since. 
We emphasize that, while in many cases of interest Hamiltonians of this type will be local in real space, the operators $Q_J^{p,q}$ will generally be long-ranged in the Landau-level basis used here. This represents a considerable obstacle both in analytic and numerical treatments. 
While various symmetries may be imposed, here we will assume only angular momentum conservation, as enforced by the Kronecker-delta in equation \eqref{eq:Q}. This renders the model center-of-mass- \cite{Seidel2005}, or, in modern language, dipole-conserving.

Although equation \eqref{eq:Hl} is of a very general form, solvable models of this type are usually frustration free: known eigenstates will be zero modes, i.e., ground states that are annihilated by the Hamiltonian. 
The positive semi-definiteness of each term $\of{Q_J^{p,q}}^\dagger Q_J^{p,q}$ then  necessitates
\[ Q_J^{p,q}\ket{\psi} = 0, \]
for all $p,$ $q$, and $J$, i.e., any zero mode $\ket{\psi}$ will be annihilated by each such term in the Hamiltonian.

Consider now the basis states of the Fock space of the form
\begin{equation}\label{eq:S} \ket{S} = \ket{j_1r_1,\dots,j_Nr_N} = c^\dagger_{j_1,r_1}\dots c^\dagger_{j_N,r_N} \ket{0}, \end{equation}
where the ``$S$'' is a reference to Slater-determinants. We specialize to fermions in nomenclature only. Indeed, our results will equally hold for bosons, except where otherwise noted, where the state $\ket{S}$ will be a bosonic permanent, also know as a ``monomial'' in the literature, which is short for ``symmetrized monomial''. Instead of Slater-determinant or permanent/monomial, we will often use the neutral term {\em configuration} to refer to such occupation number eigenstates. 

We define the \textit{pattern of} the configuration $\ket{S}$ to be the sequence
\[ \{n_j\} = \{ \braket{S|\hat{n}_j|S} \}, \]
where
\begin{equation} \hat n_j = \sum_r c_{j,r}^\dagger c_{j,r}
\label{eq:number}\end{equation} 
is the usual number operator. Necessarily
\[ \sum_j n_j = N \text{ and } n_j \geq 0\]
for each $j$. For $\ket{\psi}$ an $N$-particle zero-mode, we will call a sequence $\{n_j\}$ 
a {\em pattern of} $\ket{\psi}$ if a basis state $\ket{S}$ of this pattern has a non-zero coefficient in the expansion of $\ket{\psi}$.

Any pattern can be represented either by specifying the sequence $\{n_j\}$ as introduced above, or by specifying the sequence $j_1,\dotsc, j_N$ of the underlying state \eqref{eq:S}. (The $r$-indices defining this state are obviously irrelevant to the pattern.)
We will denote such sequences, defining patterns in terms of $j$ (``angular momentum'') quantum numbers, by $[j_i]$, to distinguish them from equivalent but distinct occupancy-number sequences $\{n_j\}$. As, in particular, the elements of $[j_i]$ add to the total angular momentum of the state, $J=\sum_i j_i$, it is natural to view the associated pattern as a partition of the number $J$.\footnote{In mathematics, the terms $j_i$ of the partition are conventionally non-negative, whereas in some physical contexts negative $j_i$ are permitted, see above. This can always be remedied via positive additive shifts, which will, however, be without consequence in the following. }
In this context, we think of the $j_i$ as the ``terms'' or ``parts'' of the partition, and the $n_j$ as the ``multiplicities'', where $n_j$ is the number of times the part $j$ occurs in the partition. We may always go back and forth between these representations via
\[
n_j = |\{ i: j_i=j\}|\quad\text{and}\quad
j_i= \text{min}\{ j: \sum_{j'=j_{\sf min}}^j n_{j'} \geq i \}\,,
\]
where $j_{\sf min}$ is the minimum allowed $j$-value, and, in writing the above, we have adopted the convention $j_1\leq j_2\leq\dotsc\leq j_N$ (which differs from most mathematical texts on partitions). We will use this convention unless otherwise noted.

For patterns/partitions corresponding to the same $J$, we may adopt the partial order, or ``dominance relation'' found in the mathematical literature, where
\begin{align}
[j_i]\geq {[j'_i]} \;\; \iff\;\; \sum_{\substack{i\\ i\geq i_0}} j_i \geq 
\sum_{\substack{i\\ i\geq i_0}} j'_i
 \quad \text{for all}\;\;i_0\,.
\end{align}
Whenever $[j_i]\geq {[j'_i]}$, we will also write $\{n_j\}\geq \{n'_j\}$ for the occupancies/multiplicites associated with $[j_i]$, ${[j'_i]}$, respectively.
We will, in particular, be interested in applying this dominance relation to root patterns. 
Given that we only consider boundary conditions that render any subspace of a given $J$ finite dimensional, there are always maximal elements under this relation. We will call any pattern of a state $\ket{\psi}$ that is not dominated by any of the other patterns of $\ket{\psi}$ a {\em root pattern} of $\ket{\psi}$. As stated, if $\ket{\psi}$ has well-defined angular momentum $J$, which we will always assume, there must always be at least one root pattern. Whenever there is only a single Landau level without internal degrees of freedom, this definition of a root pattern/root partition agrees with that found in the literature(e.g. \cite{Bernevig2008, Mazaheri2015}). 
For the case of multiple Landau levels and/or internal degrees of freedom, we obtain patterns from configurations by ``forgetting'' the internal $r$-indices, or ``collapsing'' along the $r$-direction.
We may thus think of a pattern as a configuration of identical particles occupying a single Landau level, where multiple occupancies are allowed (irrespective of whether the original particles {\em with} the $r$-degree of freedom were fermions or bosons).  
This is represented in figure  \ref{Fig2}.
In terms of the occupancies $\{n_j\}$, we may then understand dominance as follows\cite{macdonald1998symmetric}. Consider ``inward squeezing'' processes as shown in the figure, where a pair of particles is hopping inward while conserving total $J$.
Then, $\{n_j\}\geq \{n'_j\}$ if and only if $\{n'_j\}$ can be obtained from  $\{n_j\}$ via a (finite, possibly empty) sequence of inward squeezing processes. 

Having now a suitable definition of the root patterns of a state $\ket{\psi}$ that is applicable to the general situation of multiple Landau levels/internal degrees of freedom, we proceed by defining a similarly general notion of a {\em root state}.
Given any root pattern $\{n_j\}$ of a state $\ket{\psi}$, we will refer to any configuration $\ket{S}$ that has pattern $\{n_j\}$ {\em and} appears in the expansion of $\ket{\psi}$ into occupation number eigenstates as a {\em root configuration} of $\ket{\psi}$. Equivalently, we will refer to the sequence of multi-indices $(j_i,r_i)$ defining $\ket{S}$ in \eqref{eq:S} as a root configuration.

%Moreover, $\{n_j\}$ is called a \textit{root pattern of} $\ket{\psi}$ if  there exists at least one state $\ket{S}$ with pattern $\{n_j\}$ such that
%\begin{subequations}\label{eq:root}
%\begin{equation}\label{eq:root1} \braket{S|\psi} \neq 0, \end{equation}
%and 
%for any \textit{outward/inverse squeezing operator}  $\mathcal{O} = c^\dagger_{j_2+d,r_1} %c^\dagger_{j_1-d,r_2}c_{j_1,r_3}c_{j_2,r_4}$,
%\begin{equation} \braket{\psi|\mathcal{O}|S} = 0,
%\end{equation} 
%\end{subequations}

%where  $j_1  \leq j_2$, $d>0$, and $r_1, r_2, r_3, r_4$ are free (see figure \ref{Fig2}). We call any state $\ket{S'}$ associate to the root pattern $\{n_j\}$ a \textit{root configuration}.
%That is, any state $\ket{S'}$ that has the same pattern as a state $\ket{S}$ satisfying both Eqs. \eqref{eq:root} is considered a root configuration, even though $\ket{S'}$  may itself not satisfy equation \eqref{eq:root1}.
%Therefore, a state  $\ket{S'}$ is a root configuration if and only if its associated pattern is a root pattern. We wish to define projections operators associated with the subspace defined by a root pattern. To this end,  given a root pattern $\{n_j\}$, let
%\[ I_{\{n_j\}} := \{ \ket{S} | \ket{S} \text{ has pattern } \{n_j\} \}, \] 
%and define the \textit{root pattern projection operator associated to pattern} $\{n_j\}$ by

For any root pattern $\{n_j\}$, we also define an associated projection operator
\begin{equation} \label{eq:pn} P_{\{n_j\}} := \sum_{\ket{S},\,\ket{S}\;\text{has pattern}\;\{n_j\}  } \frac{1}{\braket{S|S}} \ket{S}\bra{S}.  \end{equation}
It is generally possible for a zero mode to have multiple root patterns.
 We may thus also consider the entire subspace of all root configurations by defining the \textit{root projection operator}, 
\begin{equation} \label{eq:Proot}P_\text{root} := \sum_{\{n_j\} \text{ a root pat. of } \ket{\psi}} P_{\{n_j\}}. 
\end{equation}  
We then refer to $P_\text{root}\ket{\psi}$ as the \textit{root state}, or equivalently, the DNA of $\ket{\psi}$. (This definition of a root state is slightly more restrictive than that used in most of the literature with $n>1$, but will agree in practical applications.) 
The root states consists precisely of that part of the expansion of $\ket{\psi}$ into basis states $\ket{S}$
that one obtains if only root configurations are kept.
Note that in equation \eqref{eq:pn}, we chose to sum over all $\ket{S}$ of a given root pattern.
By definition, these need not all be root configurations, as some of them (but not all) may not appear in the expansion of 
$\ket{\psi}$: in $P_\text{root}\ket{\psi}$, these extra terms do not matter. In this way, one obtains a rather simple form for
$P_{\{n_j\}}$ in terms of second quantized ladder operators,

\begin{equation} \label{eq:Pn}
P_{\{n_j\}} = \frac{1}{\prod_j (n_j)!} \sum_{r_1,\dots,r_N} c^\dagger_{j_1 r_1} \dots c^\dagger_{j_N,r_N}c_{j_N,r_N}\dots c_{j_1,r_1}, \end{equation}
where the sum goes over all sequences of multi-indices with pattern $\{n_j\}$.
Strictly speaking, the two forms of the projection operator are equivalent while acting on states of up to $N$ particles only. However, we will have no use for the operator $P_{\{n_j\}}$ to act on states with more than $N$ particles. We finally emphasize that though we notationally suppress it, the definition of $P_{\text{root}}$ explicitly depends on $\ket{\psi}$, and in expressions like $P_{\text{root}}\ket{\psi}$, it will be generally implied that $\ket{\psi}$ is acted upon by ``its root projector''. As such, ``taking the root state'' is thus by no means a linear operation, however, for {\em fixed} $\ket{\psi}$, Equation \eqref{eq:Proot} is, of course, that of a linear operator. For brevity, we will also write $\ket{\psi}_{\text{root}}$ for the root state of $\ket{\psi}$.

\begin{figure}
    \centering
    % \begin{tikzpicture}[scale=1.2]
    % \node at (0,0) {\large 0}; 
    % \node at (1,0) {\large 0};
    % \node at (2,0) {\large 1};
    % \node at (3,0) {\large 0};
    % \node at (4,0) {\large 1};
    % \node at (5,0) {\large 0};
    % \node at (6,0) {\large 0};
    % \node at (0,-2.2) {\large 0}; 
    % \node at (1,-2.2) {\large 1};
    % \node at (2,-2.2) {\large 0};
    % \node at (3,-2.2) {\large 0};
    % \node at (4,-2.2) {\large 0};
    % \node at (5,-2.2) {\large 1};
    % \node at (6,-2.2) {\large 0};
    % \node at (3.4,-1.1) {\large $\mathcal{O}$};
    
    % \draw[->] (3,-.3) -- (3,-1.9);
    % \draw[<-] (1,.3) .. controls (1.25,.6) and (1.75,.6) .. (2,.3);
    % \draw[->] (4,.3) .. controls (4.25,.6) and (4.75,.6) .. (5,.3);

    % \end{tikzpicture}
    \includegraphics[width=5 in]{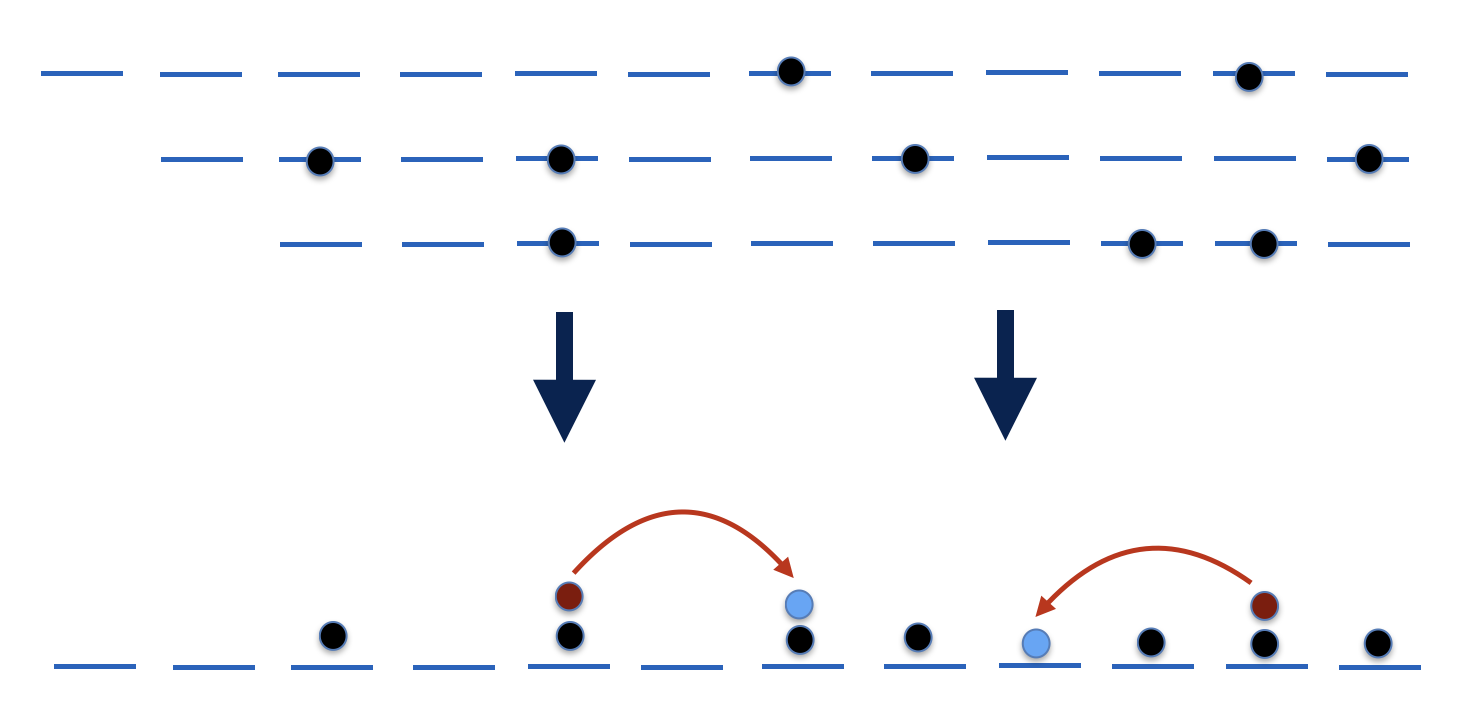}
    \caption{Graphical representation of patterns/partitions (bottom) associated to configurations (top).
    The pattern/partition is obtained by dropping the vertical index ($r$) present in the configuration, and representing each horizontal index ($j$) by a single orbital. These single orbitals are then occupied by identical particles that may multiply occupy each orbital, resulting in what we refer to as a pattern of occupancies $n_j$ (see text). (If the original particles were bosons, multiple occupancies are already possible for each state represented by a horizontal line in the configuration on top.) On patterns, one may define inward squeezing processes as indicated by the arrows and as explained in the text.
    }
    \label{Fig2}
\end{figure}

We now have everything in place to state the main result of this work.

\begin{theorem}\label{rootthm}
For fixed $(j_1,\dotsc ,j_l)$, let $Z$ be an $l$-body operator of the form 
\begin{align}\label{eq:Z} Z &:= \sum_{r_1,\dots r_l} \omega_{r_1,\dots,r_l} c_{\beta_{1}} \dots c_{\beta_{l}},
\end{align}
with $\beta_i = (j_i, r_i)$. If 
\[ Z \ket{\psi^l}_{\sf root} = 0 \]
%no need for beta index
for all $l$-particle zero-modes $\ket{\psi^l}$ of the Hamiltonian $H$ given in equation \eqref{eq:Hl}, then 
\begin{equation} \label{eq:ZZeroN}  Z\ket{\psi}_{
\sf root}= 0 \end{equation}
for any zero-mode $\ket{\psi}$ of $H$, regardless of particle number. 
\end{theorem}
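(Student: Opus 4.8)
The plan is to prove the statement by induction on the particle number $N$, peeling particles from the outer edge of the angular-momentum ladder and using the hypothesis as the base case $N=l$. Two preliminary reductions set this up. First, since $Z$ removes exactly one particle from each of the fixed orbitals $j_1,\dots,j_l$, it sends a configuration of pattern $\{n_j\}$ to configurations of the single, definite pattern obtained by decrementing the occupancies at $j_1,\dots,j_l$; this map is injective on patterns, so distinct root patterns of $\ket{\psi}$ go to distinct output patterns with no cancellation between them. Hence $Z\ket{\psi}_{\text{root}}=0$ is equivalent to the family of per-pattern statements $Z\,P_{\{n_j\}}\ket{\psi}=0$, one for each root pattern $\{n_j\}$ of $\ket{\psi}$. (If some $j_i$ is unoccupied in $\{n_j\}$, this holds trivially, so I may assume $\{n_j\}$ is occupied at every $j_i$.) Second, I will use the elementary but crucial fact that the zero-mode space is closed under \emph{any} annihilation operator: because each $Q^{p,q}_J$ in Eq.~\eqref{eq:Hl} is a pure product of annihilation operators, it (anti)commutes with any $c_a$, so $Q^{p,q}_J\,c_a\ket{\psi}=\pm c_a\,Q^{p,q}_J\ket{\psi}=0$ and $c_a\ket{\psi}$ is again a zero mode (or zero). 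This is what lets an $N$-particle statement be reduced to an $(N-1)$-particle one.

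Fix a root pattern $\{n_j\}$ and let $j^\ast$ be its outer edge (the largest occupied orbital). The core of the inductive step is the claim that passing to the root state commutes with peeling this outermost orbital. Concretely, for each internal index $r$ the state $c_{j^\ast,r}\ket{\psi}$ is a zero mode, and I would show that $\{n_j\}-\delta_{j^\ast}$ is a root pattern of $c_{j^\ast,r}\ket{\psi}$, with $P_{\{n_j\}-\delta_{j^\ast}}\,c_{j^\ast,r}\ket{\psi}=\pm\,c_{j^\ast,r}\,P_{\{n_j\}}\ket{\psi}$. The key observation is that removing a particle from a \emph{fixed} orbital preserves the dominance order: for two patterns that both occupy $j^\ast$, every tail-sum $\sum_{j\ge j_0}n_j$ shifts by the same amount under the removal, so $\{n_j\}\ge\{n'_j\}\iff\{n_j\}-\delta_{j^\ast}\ge\{n'_j\}-\delta_{j^\ast}$. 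Since only configurations occupying $j^\ast$ survive $c_{j^\ast,r}$, and $\{n_j\}$ is maximal among all patterns of $\ket{\psi}$ (in particular among those occupying $j^\ast$), its decrement stays maximal, hence a root pattern of $c_{j^\ast,r}\ket{\psi}$.

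With this lemma, suppose first $j^\ast\notin\{j_1,\dots,j_l\}$. Then $c_{j^\ast,r}$ anticommutes with $Z$, and the inductive hypothesis applied to the $(N-1)$-particle zero mode $c_{j^\ast,r}\ket{\psi}$ gives $Z\,(c_{j^\ast,r}\ket{\psi})_{\text{root}}=0$; projecting onto the relevant output pattern and using the lemma yields $c_{j^\ast,r}\,Z\,P_{\{n_j\}}\ket{\psi}=0$ for every $r$. Because every configuration in $Z\,P_{\{n_j\}}\ket{\psi}$ still occupies $j^\ast$ (that orbital is untouched by $Z$), applying $\hat n_{j^\ast}=\sum_r c^\dagger_{j^\ast,r}c_{j^\ast,r}$ forces $Z\,P_{\{n_j\}}\ket{\psi}=0$, closing the induction in this case. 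The base case $N=l$ is exactly the hypothesis, and $N<l$ is vacuous.

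The main obstacle is the complementary case in which the outer edge—and possibly every occupied orbital of $\{n_j\}$—lies in $\{j_1,\dots,j_l\}$, so that no spectator particle can be peeled on an orbital disjoint from the support of $Z$. Here the simple anticommutation and reassembly step breaks down, and one must instead argue that, at the dominance-maximal level, the $l$ particles that $Z$ consumes decouple from the inner spectators into an effective $l$-particle problem: any term of a $Q^{p,q}_J$ that would couple the active orbitals $\{j_i\}$ to inner spectators is an inward-squeezing, dominance-lowering move and therefore cannot feed back into the maximal sector. Making this decoupling precise—so that the restriction of $\ket{\psi}$ to its dominance-maximal, $Z$-active content is literally the root state of a genuine $l$-particle zero mode, to which the hypothesis applies—is the delicate heart of the argument, and is where the nonlinearity of the root-state map (stressed in the text) must be handled with care.
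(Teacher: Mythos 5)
Your proposal has a genuine gap, and it is exactly the one you flag yourself: the case in which the outer edge $j^\ast$ of the root pattern lies in $\{j_1,\dots,j_l\}$ is not proved but replaced by an informal ``decoupling'' picture, and that sketch is not salvageable as stated---showing that the dominance-maximal, $Z$-active content of $\ket{\psi}$ is literally the root state of a genuine $l$-particle zero mode is essentially the entire content of the paper's own proof (its Lemma \ref{lem1} and Corollary \ref{ZKillsL}), not a routine remark. Moreover, your diagnosis of where the difficulty sits is off target. The (anti)commutation step never breaks down: $Z$ and $c_{j^\ast,r}$ are both pure products of annihilation operators, so $c_{j^\ast,r}Z=(-1)^{lf}Zc_{j^\ast,r}$ (with $f=1$ for fermions, $0$ for bosons) whether or not $j^\ast$ occurs among the $j_i$. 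What actually fails when $j^\ast\in\{j_1,\dots,j_l\}$ is your reconstruction step: a configuration in $Z\,P_{\{n_j\}}\ket{\psi}$ carries occupancy $n_{j^\ast}-n^Z_{j^\ast}$ at $j^\ast$, so precisely when $n_{j^\ast}=n^Z_{j^\ast}$ the conclusion $\hat n_{j^\ast}Z P_{\{n_j\}}\ket{\psi}=0$ becomes vacuous. A separate, smaller imprecision: your justification of the dominance-preservation lemma via tail sums of occupancies ``shifting by the same amount'' is not correct as stated, since a competing pattern of $\ket{\psi}$ may occupy orbitals beyond $j^\ast$ and dominance is defined through tail sums of the parts $[j_i]$, not of the $n_j$; the correct tool is the union fact $[\mu']\geq[\mu],\,[\nu']\geq[\nu]\Rightarrow[\mu']\cup[\nu']\geq[\mu]\cup[\nu]$, which the paper cites from Macdonald, applied with $[\nu']=[\nu]$ the one-part partition $(j^\ast)$.

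The good news is that your induction closes without any decoupling argument, by one change: peel at an arbitrary orbital $j'$ with $n_{j'}>n^Z_{j'}$ rather than insisting on the outer edge. Such an orbital exists whenever $N>l$, because $n_j\le n^Z_j$ for all $j$ would force $N=\sum_j n_j\le\sum_j n^Z_j=l$ (and if $n_j<n^Z_j$ for some $j$, then $Z P_{\{n_j\}}\ket{\psi}=0$ trivially, as the paper also notes). Your dominance lemma holds verbatim for this $j'$---the union fact does not care whether the removed part is the largest: if a pattern of $c_{j',r}\ket{\psi}$ strictly dominated $\{n_j\}-\delta_{j'}$, adding back the part $j'$ would give a pattern of $\ket{\psi}$ strictly dominating $\{n_j\}$. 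Then, exactly as in your easy case, $c_{j',r}ZP_{\{n_j\}}\ket{\psi}=\pm ZP_{\{n_j\}-\delta_{j'}}c_{j',r}\ket{\psi}=0$ for every $r$ (trivially so for those $r$ with $c_{j',r}P_{\{n_j\}}\ket{\psi}=0$), summing over $r$ gives $\hat n_{j'}ZP_{\{n_j\}}\ket{\psi}=0$, and now every configuration in $ZP_{\{n_j\}}\ket{\psi}$ has occupancy $n_{j'}-n^Z_{j'}\ge 1$ at $j'$, forcing $ZP_{\{n_j\}}\ket{\psi}=0$. You would also need the one-line extension to zero modes that are not particle-number eigenstates, which the theorem claims and the paper handles by decomposing into number sectors with disjoint configuration content. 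So repaired, your route is genuinely different from, and arguably more elementary than, the paper's: the paper strips $N-l$ particles at once along a complementary root configuration, proves that $Z$ annihilates the entire reduced $l$-particle zero mode and not merely its root part, and then restores the removed particles through an explicit operator identity for strings of creation and annihilation operators (its Lemma \ref{NumberOpCommRel} and Corollary \ref{NumOpProd}); your one-particle-at-a-time induction replaces all of that operator algebra with the union/dominance argument plus the positivity of $\hat n_{j'}$ on the relevant subspace.
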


Before proving this theorem, we first demonstrate its utility with a few examples, some well-established and some lesser known.

\section{Some basics applications\label{applications}}
\subsection{Laughlin state}
The previous theorem
is acutely relevant to situations involving multiple-component states, such as  the discussion of zero-mode properties of Refs. \cite{Seidel2008a, Seidel2011, Chen2017, Bandyopadhyay2018,  Bandyopadhyay2019, 2022arXiv220409684T}.
In particular, it furnishes a streamlined machinery for the discussion of zero modes, and generalizes
previous approaches
to three- and higher-body interactions. That said, it retains its usefulness in the simpler context of frustration free interactions in single component states; to demonstrate the utility of the theorem, we first turn to the simplest such example: the label $r$ can then be dropped.

Perhaps the best-known single component state is the (fermionic) Laughlin state at filling factor $1/3$, with first quantized wave function,
\begin{equation}\label{eq:psiL}
\psi_L = \prod_{i<j} (z_i-z_j)^3 \,\exp(-\frac 14 \sum_k |z_k|^2)\,,    
\end{equation}
in disk geometry. Here $z_i=x_i+iy_i$ is the complex coordinate of the $i$th particle.
In a seminal paper\cite{Rezayi1994}, Rezayi and Haldane observed that (in the cylinder geometry, though this makes no difference to the root analysis) this wave function has a ``Tao-Thouless''\cite{Tao1983} type root pattern of the form
\begin{equation}\label{eq:pattern}
  \{n_j\}=  1001001001001001\dotsc ,
\end{equation}
which is the aforementioned DNA of the Laughlin state. Note that for single-component states, a root pattern fully determines the associated root state, assuming the root pattern is unique. 
The Laughlin state equation \eqref{eq:psiL} is the densest (i.e., lowest in terms of angular momentum) zero mode of the $V_1$ Haldane pseudo-potential, which, in first quantization, takes on the form
\begin{equation}\label{eq:V1}
    \hat V_1 = \sum_{i<j} \hat P_{ij}\,,
\end{equation}
where $P_{ij}$ projects the pair of particles with indices $i$ and $j$ onto the subspace where this pair has relative angular momentum $1$.
Similar patterns as in 
equation \eqref{eq:pattern} apply to all root states of zero modes: any pattern satisfying  the constraint of having no more than one particle in any three adjacent sites occurs as the root pattern of some zero mode. This can, for example, be seen from a thin cylinder analysis \cite{Rezayi1994, Seidel2005}, and constraints of this kind were first appreciated to apply to a large class of states of Jack-polynomial type\cite{Bernevig2008, Bernevig2008a}, and have been coined ``generalized Pauli principles'' (GPPs).
We will now re-derive some of these results using our theorem. 
Accordingly, we first express $\hat V_1$ in second quantization, which is well-known to be of the following form,
\begin{equation}\label{eq:V12nd}
\hat V_1
= \sum_{J} Q_J^\dagger Q_J\,,
\end{equation}
where the precise expression for the $Q_J$ depends somewhat on the geometry and was perhaps first explicitly stated for the cylinder/torus case\cite{Lee2004}. Here, we will consider the disk geometry for definiteness, where
$J\geq 0$ and\cite{Ortiz2013}
\begin{equation}\label{eq:QJ}
Q_J = \sum_{\substack{p>0\\
p\in \mathbb{Z} \;\text{for}\;J\;\text{even}\\
p\in \mathbb{Z}+
\frac 12 \; \text{for}\;J\;\text{odd}}} p2^{-J/2+1}\sqrt{\frac{2}{J} { J \choose J/2+p}} c_{\frac J2-p}c_{\frac J2+p}.
\end{equation}
$\hat V_1$ thus satisfies the requirements needed to apply our theorem. While the proof of the theorem, given below, heavily utilizes the second quantization of the Hamiltonian, one efficient way to proceed is to study
$k=2$-particle zero modes using the original first-quantized form of the Hamiltonian. Note that 
\begin{equation}
    \psi_{m, J} = (z_1-z_2)^m (z_1+z_2)^{(J-m)}\,,
\end{equation}
is a basis of all two-particle states in the lowest Landau level,
where $m$ is odd and $J\geq m$, and we drop the obligatory Gaussian factor displayed in equation \eqref{eq:psiL}.
Here, $m$ and $J$ play the roles of relative and total angular momentum of the 2-particle state, respectively. A complete set of 2-particle zero modes is thus given by the $\psi_{m,J}$ with $m>1$, or $m\geq3$. By forming suitable new linear combinations, we arrive at the alternative basis
\begin{equation} 
\phi_{m,J}=(z_1-z_2)^m\times
\begin{cases} 
      (z_1z_2)^{\frac{J-m}{2}} & J\;\text{odd} \\
  z_1^{\frac{J-m-1}{2}} z_2^{\frac{J-m+1}{2}} +
   z_1^{\frac{J-m+1}{2}} z_2^{\frac{J-m-1}{2}}& J\;\text{even}   
   \end{cases}\;.
\end{equation}
Expressing the product $z_1z_2$ in terms of $z_1+z_2$ and $z_1 - z_2$, one sees that the $\phi_{m,J}$ are linear combinations of the $\psi_{m',J}$ with same $J$ and $m'\geq m$. Therefore, the matrix relating $\phi$'s to $
\psi$'s being triangular, the $\phi_{m,J}$ with $m\geq 3$ still are a basis of the 2-particle zero mode space. Since $c_r^\dagger$ creates a state proportional to $z^r$, the expansion of $\phi_{m,J}$ into 2-particle Slater determinants $\ket{S}$ can be read off the monomial expansion of $\phi_{m,J}$ in terms of monomials $z_1^az_2^b$. It is then clear that $P_{\text{root}}\phi_{m,J}$ has a pattern of the form
\begin{equation}
  \dots 00100\dotsc 00100\dotsc,
\end{equation}
with two $1$'s separated by $m$ sites ($m$+1  sites) for $J$ odd ($J$ even), where the indices of the two occupied sites add up to $J$. Since $m\geq3$ for zero modes, root states of these 2-particle zero modes cannot have patterns of the form $\dotsc 0110\dotsc$ or $\dotsc 01010\dotsc$. That is, all such root states are annihilated by
$Z^1_j=c_j c_{j+1}$ and $Z^2_j=c_j c_{j+2}$, for any $j$.
Thus far, this holds for zero modes of the form $\phi_{m,J}$, $m\geq 3$. The theorem requires this to be true for {\em any} 2-particle zero mode, and thus for arbitrary linear combinations of $\phi_{m,J}$'s with $m\geq 3$.
One easily sees this to be the case: for different total angular momenta $J$, $J'$, a superposition of two 2-particle zero modes having these angular momenta results in the same superposition for the corresponding root states. When linearly combining $\phi_{m,J}$'s with different $m$ but identical $J$, the new root state is that corresponding to the $\phi_{m,J}$ with the largest $m$. Either way, the given $Z$-operators annihilate the root states of any such linear combination. By constructions, this is tantamount to the absence of sequences $11$ and $101$ in any root pattern, at first for 2-particle zero modes, but, by the theorem, also for $N$-particle zero modes. This, then, gives exactly the GPP for the $1/3$ Laughlin state, namely that at root level, there can be no more than one particle in any three adjacent sites. More precisely, this gives the GPP for {\em all} zero modes of the $V_1$-pseudo-potential. 

Note that this logic does not yet imply that for any pattern satisfying this rule, there is an associated zero mode. However, knowing that the Laughlin state is a zero mode and equation \eqref{eq:psiL} does have the pattern \eqref{eq:pattern} as its root pattern, and that no denser pattern can satisfy the GPP, this automatically proves the Laughlin state to be a densest possible zero mode of \eqref{eq:V1}.
We can also show that no other zero mode has the same angular momentum as the Laughlin state: given that the pattern 
\eqref{eq:pattern} is the unique pattern satisfying the GPP at this angular momentum, and given that any two zero modes would admit a non-trivial linear combination that is free of this pattern, we'd have a contradiction to the GPP. 
Similarly, knowing that a set of $N$-particle zero modes exists for every possible pattern consistent with the GPP, with some care, leads to the conclusion that this set is a complete set of zero modes\cite{Mazaheri2015,Chen2017, Bandyopadhyay2018,  Bandyopadhyay2019, 2022arXiv220409684T}. 

Though not explicitly using our theorem as stated here, a variant of this logic has first been presented for the Laughlin state in Ref. \cite{Mazaheri2015}. To be sure, for the Laughlin state and many other single component wave functions\cite{Stone1990, Read1996, Milovanovic1996,Schossler2021e}, there exist alternative ways to establish completeness of a class of zero mode wave functions, especially in the context of Jack polynomials\cite{Feigin2002, Bernevig2008, Bernevig2008a}. These alternatives are largely lost when multi-component wave functions are concerned, especially in the mixed Landau-level setting suitable for parton states.
For the latter, the methods following the philosophy laid out in this section have been successfully generalized,
\cite{Chen2017, Bandyopadhyay2018,  Bandyopadhyay2019, 2022arXiv220409684T}
 and the theorem presented here may serve to streamline this approach. One key additional feature of this approach when applied to multi-component states is that it will generally lead to so-called {\em entangled Pauli principles} (EPPs)\cite{Bandyopadhyay2018}: 
As the sum in equation \eqref{eq:Z} generally contains more than one term, equation \eqref{eq:ZZeroN} tends to enforce entanglement at root level.

\subsection{Jain-$2/5$ state}

A simple example of an EPP can be found in the (unprojected) Jain-2/5 composite fermion state. This state admits a parent Hamiltonian\cite{Jain1990,Rezayi1991} which acts within the lowest {\em two} Landau levels and enforces in any of its zero modes (of any particle number) the ``clustering condition'' that the wave function has at least a third-order zero when two particles approach the same point. In fact, when restricting to the lowest Landau level only, the zero mode condition of the $V_1$ Haldane pseudo-potential can also be stated in exactly this way, as is manifest in the 2-particle zero modes discussed above. It is, therefore, not surprising that this Hamiltonian can also be cast in the general form \eqref{eq:Hl} \cite{Chen2017}. In doing so, a non-orthogonal, non-normalized single particle basis $z^a \bar z^b$ is useful\cite{Chen2017, Bandyopadhyay2019}, where $a\geq 0$ and $b=0,1$ when restricting to the lowest two landau levels. Taking $j=a-b$ corresponds to physical angular momentum and leads to the scheme in figure  \ref{Fig1b} with $r=b$, while $j=a$ leads to the scheme in figure  \ref{Fig1a}, also with $r=b$. Note that each scheme comes with its own, slightly different definition of ``root state''. We will use the latter scheme (figure  \ref{Fig1a}) for definiteness.
As mentioned, the 2-particle zero modes discussed above remain zero modes for the two-Landau-level problem at hand. We also have the zero modes
\begin{equation} \label{eq:phimeven}
\tilde\phi_{m,J}=(z_1-z_2)^m\times
\begin{cases} 
  (z_1z_2)^{\frac{J-m+1}{2}}(\bar z_1-\bar z_2) & J\;\text{odd}   \\
  (z_1z_2)^{\frac{J-m}{2}}(z_1\bar z_1-z_2\bar z_2) & J\;\text{even}   
   \end{cases}\;,
\end{equation}
 % z_1^{\frac{J-m-1}{2}}\bar z_1 z_2^{\frac{J-m+1}{2}} -
  % z_1^{\frac{J-m+1}{2}} z_2^{\frac{J-m-1}{2}} \bar z_2 & J\;\text{odd} 
where $m$ is even, and $m\geq 2$, and 
\begin{equation} 
\tilde\phi_{m,J}=(z_1-z_2)^m\times
\begin{cases} 
  (z_1z_2)^{\frac{J-m}{2}}(\bar z_1z_1+z_2\bar z_2) & J\;\text{odd}   \\
  (z_1z_2)^{\frac{J-m+1}{2}}(\bar z_1+\bar z_2) & J\;\text{even}   
   \end{cases}\;,
\end{equation}
where $m$ is odd and $m\geq 3$. On top of that, there is a fourth category of zero modes, consisting of the original $\phi_{m,J}$ ($m\geq 3$) multiplied by $\bar z_1\bar z_2$.
In principle, as $r$ now runs over two values, two fermions could occupy states with the same $j$; at root level, however, all zero modes exhibit single occupancy in each orbital. Therefore, $Z^0_j=\tilde c_{j,0}\tilde c_{j,1}$ is now included in our set of $Z$-operators, where we introduced a tilde to refer to pseudo-fermions--fermion ladder operators that create/annihilate states in a basis that is not orthonormal. For this, we also define dual creation operators $\tilde c^\ast _{j,r}$ such that $\{\tilde c_{j,r},\tilde c^\ast_{j',r'}\}=\delta_{r,r'}\delta_{j,j'}$. We may work with pseudo-fermions just as we do with ordinary fermions, keeping in mind that $\tilde c^\ast _{j,r}$ and $\tilde c_{j,r}$ are not Hermitian conjugate, see, e.g., Ref. \cite{Bandyopadhyay2019}. 
Most importantly, we may re-express pseudo-fermions through ordinary fermions via a change of basis. Hence, our theorem admits $Z$-operators expressed in terms of pseudo-fermions without needing modification. Similarly, there is still no nearest-neighbor occupancy possible at root level, and we may introduce additional operators of the type $Z^1_j$ defined above, adorned with different combinations of $r$-indices, to formulate this constraint in two Landau levels. However, it is now possible, at root level, for particles to occupy next-nearest-neighboring sites. We see that this happens for $\tilde\phi_{2,J}$ with even $J$, whose root state is 
\begin{equation}\label{eq:singlet}
  \left(\tilde c^\ast_{J/2-1,0} \tilde c^\ast_{J/2+1,1} -     \tilde c^\ast_{J/2-1,1} \tilde c^\ast_{J/2+1,0}\right) \ket{0}\,.
\end{equation}
Conversely, this is the only 2-particle root state that can occur with two particles at distance 2. 
Therefore, when generalizing the operators $Z^2_j$ introduced above, we must form all possible linear combinations
$\sum_{r_1,r_2}\omega_{r_1,r_2} \tilde c_{J/2-1,r_1}\tilde c_{J/2+1,r_2}$
that annihilate equation \eqref{eq:singlet}. 
Note that if we associate a pseudo-spin $1/2$ degree of freedom with the $r$-label, then \eqref{eq:singlet} describes a singlet.
By the theorem, we are thus enforcing that for general $N$-particles zero modes, at root level, there are no double or nearest-neighbor occupancies, and next-nearest neighbors must form a pseudo-singlet. This, then, characterizes the EPP of the unprojected Jain-$2/5$ state and its parent Hamiltonian\cite{Chen2017}. It is easy to see that the densest root state consistent with these rules is that shown in figure  \ref{FigJain}. 
It is indeed a pattern that occupies $2/5$ of all $j$-sites, as we should expect of a fractional quantum Hall state at filling factor $2/5$. In particular, by ``monogamy of entanglement'', each particle in the root state can participate in only one singlet, and hence no denser root state is possible. Moreover, this is the root state of the Jain-$2/5$ wave function.
\begin{figure}
    \centering
    \includegraphics[scale=0.6]{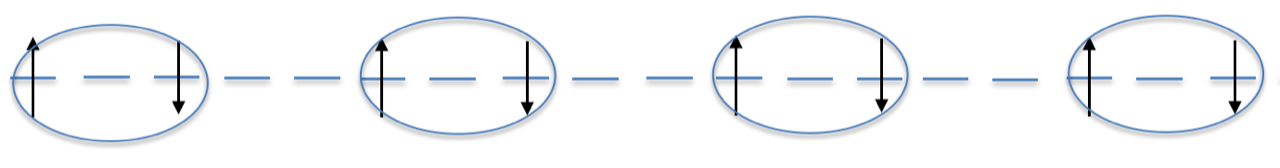}
    \caption{Root state, or DNA, of the Jain-2/5 wave function. Ovals represent singlets. This configuration manifestly occupies 2/5 of all available hall states, and satisfies the entangled Pauli principle that two particle entangled states must occur within a distance of 2. }
    \label{FigJain}
\end{figure}

We thus see how ``entangled DNA" can come about. By arguments similar to those given above for the Laughlin state, the Jain-$2/5$ state is seen to be the densest zero mode of its parent Hamiltonian.
In addition, it has been shown\cite{Chen2017, Chen2018b} that the counting of all possible root states at a given angular momentum agrees with the counting of zero modes, again utilizing similar arguments. In this way it can be rigorously established that this mode counting agrees with the counting of modes in the associated conformal edge theory.
Such reasoning is powerful in bridging the gap between microscopic Hamitlonians and effective field theory, and is not limited to Abelian fractional quantum Hall states such as Jain states, but rather has been carried out for a variety of increasingly complex non-Abelian ``parton'' states\cite{Bandyopadhyay2018, 2022arXiv220409684T}. Here, root-level entanglement becomes more non-local, even leading to the emergence of AKLT-type\cite{Affleck1987} matrix product entanglement and generalizations thereof. We refer the interested reader to the referenced literature.

\subsection{Tao-Thouless state}

We proceed by giving one final example of a single-component Hamiltonian that is perhaps less known.
We modify the pseudopotential $\hat V_1$ of  equation \eqref{eq:V12nd} by keeping only the terms with {\em odd} $J$. By abuse of terminology, we will denote the resulting potential as $\hat V_{\frac 12}$:
\begin{equation}
   \hat V_{\frac 12}  =\sum_{J>0, J\;\text{odd}} \,Q_J^\dagger Q_J\,,
\end{equation}
with $Q_J$ as defined in equation \eqref{eq:QJ}. We note that this operator, unlike $\hat V_1$, is no longer local in real space, but this is irrelevant for the purpose of this application.
We claim that the densest zero mode of $\hat V_{\frac 12}$ is the product state associated with the pattern $10101010\dotsc$, i.e., the 
``$\nu=1/2$ Tao-Thouless state''
\begin{equation}\label{eq:TT12}
  \ket{TT_{\frac 12}}= \ket{10101010\dotsc}
  \,.
\end{equation}
Indeed, it is clear that $\ket{TT_{\frac 12}}$ is a zero mode of $\hat V_{\frac 12}$; since the leading orbital has $j=0$, there is no pair in $\ket{TT_{\frac 12}}$ with odd $J$, so $Q_J \ket{TT_{\frac 12}}=0$ for all odd $J$.
On the other hand, showing that $\ket{TT_{\frac 12}}$ is the (unique) densest zero mode of  $\hat V_{\frac 12}$ requires more thought. The theorem, though, makes this an easy task. Notice that the operator $Q_J^\dagger Q_J$ is precisely the 2-particle projector onto $\psi_{1,J}$, so that the 2-particle zero modes of $\hat V_{\frac 12}$ are those which we had for $\hat V_1$ in addition to the $\psi_{1,J}$ with $J$ {\em even}, or equivalently, the $\phi_{m,J}$ with those same indices.
Therefore, $Z^2_j$ no longer annihilates the root states of all 2-particle zero modes: in particular, it does not annihilate $P_{\text{root}} \,\phi_{1,J}$ for $J$ even.
Thus, the associated constraint prohibiting the pattern $101$ at root level is moot. $Z^1_j$, however, still satisfies the assumptions of the theorem, and consequentially the pattern $11$ is still ruled out at root level. Clearly, the pattern underlying $\ket{TT_{\frac 12}}$ is the unique densest pattern avoiding the $11$ configuration (note that $\ket{TT_{\frac 12}}$ is its own root state). Therefore, repeating the argument made for the $\nu=1/3$-Laughlin state and the $V_1$ pseudo-potential, we are able to conclude that $\ket{TT_{\frac 12}}$ is the unique densest zero mode of
$\hat V_{\frac 12}$.

\subsection{Thin Cylinder}

Finally, one may observe that once a complete set of $Z$-operators has been identified, the Hamiltonian
\begin{equation}\label{eq:HZ}
    H_{\text{root}} = \sum_Z Z^\dagger Z
\end{equation}
stabilizes all root states consistent with the GPP or EPP as zero modes. As these root states are, in many cases, found to be in one-to-one correspondence with the zero modes of some original Hamiltonian \eqref{eq:Hl}, one may think of $H_{\text{root}}$ as an effective or ``thin cylinder'' version of the Hamiltonian, since a Hamiltonian stabilizing root states will naturally emerge in the thin cylinder/thin torus limit \cite{Rezayi1994, Seidel2005,  Bergholtz2005, Seidel2006, Bergholtz2006}. Given the formal resemblance between Eqs. \eqref{eq:Hl} and \eqref{eq:HZ}, one may ask why we find it beneficial to effectively replace, for certain purposes, $Q$-operators with $Z$-operators. The reason for this is simple:  While the former are usually long ranged in the occupation number bases, see equation \eqref{eq:QJ}, and have matrix product ground states of infinite bond dimension, the latter will be short ranged and have product states or matrix product states of finite bond dimension as ground states, i.e., the root states. It is for this reason that  $Z$-operators encoding a GPP/EPP are formidable tools to explore the structure of zero mode spaces.

We also note that while with each operator $Z$ we may associate a positive (semi-definite) Hermitian operator $Z^\dagger Z$, the converse is also true, in the following sense: Each positive (semi-definite) Hermitian operator can be 
decomposed as a sum of the form $\sum_Z Z^\dagger Z$, with $Z$-operators defined in terms of strings of annihilation operators of suitable length as in equation \eqref{eq:Z}. Therefore, one may easily obtain an equivalent formulation of our theorem, with $Z$-operators replaced by positive (semi-definite) Hermitian operators.

After this brief discussion of some of its utility, we are ready to prove our theorem:

\section{Proof\label{proof}}

\begin{proof}

Our proof strategy is technical and direct: standard (anti-)commutations relations combined with the fact that $Z$ annihilates all $l$-particle zero-modes yield the desired result. The trick is to recognize that any $N>l$ particle zero-mode can be connected to an $l$-particle zero-mode through annihilation operators. For example, let $\ket{\psi}$ be an $N$-particle zero mode of equation \eqref{eq:Hl}, $N>l$, and define
a {\em reduced} $l$-particle state 

\begin{equation} \ket{\psi^l} :=   c_{\beta_{N-l}} \dots c_{\beta_1}\ket{\psi} \,, \label{eq:psil} \end{equation} 
where we assume that indices $\beta_i = (j_i,r_i)$, $1\leq i\leq N$, exist such that
\begin{equation} \ket{S} = c_{\beta_1}^\dagger \dots c_{\beta_{N-l}}^\dagger c^\dagger_{ {\beta}_{N-l+1}} \dots c^\dagger_{{\beta}_N} \ket{0} \label{eq:S2}\end{equation}
is a root configuration of $\ket{\psi}$ with pattern $\{n_j\}$. Under the circumstances given above, we will refer to $\check\beta= (\beta_1,\dotsc,\beta_{N-l})$, as a {\em partial root configuration} of $\ket{\psi}$, and the underlying pattern $\{\check n_j\}$, i.e., the pattern of the state $c_{\beta_1}^\dagger \dots c_{\beta_{N-l}}^\dagger\ket{0}$, a {\em partial root pattern}. Similarly, we refer to $\beta^{{Z}}=(\beta_{N-l+1},\dots,\beta_N)$ as a {\em complementary root configuration} of $\check\beta$ in $\ket{\psi}$, and to the underlying pattern $\{n^Z_j\}$ as a {\em complementary root pattern}. While the reason for the superscript ``$Z$'' will soon be apparent, note that the complementary root configuration/root pattern is itself a partial root configuration/root pattern. By definition, $\check\beta$ is a partial root configuration if and only if a complementary root configuration $\beta^{Z}$ exists, such that $\beta=(\beta_1,\dotsc,\beta_N)$ is a root configuration. 
Also, note that $n_j=\check n_j + n^{Z}_j$, and observe that there is nothing in our definitions that prevents $\check n_j$ and  $n^{Z}_j$ from being simultaneously nonzero, so that neither of their non-zero parts should necessarily be thought of as a sub-sequence of $\{n_j\}$.

Since $\ket{S}$ appears in the expansion of $\ket{\psi}$, the complementary root configuration $\beta^{Z}$, 
\[  c_{\beta_{N-l}} \dots c_{\beta_1}\ket{S}= c^\dagger_{\beta_{N-l+1}} \dots c^\dagger_{\beta_N} \ket{0}, \]
 appears in the occupation-number expansion of
$\ket{\psi^l}$. The same is true of any other complementary root configuration for $\check\beta$.
Below, we will show that $\{n^{Z}_j\}$ is in fact a root pattern of $\ket{\psi_l}$, but for now we note that since in particular $\{n^{Z}_j\}$ is a pattern appearing in $\ket{\psi^l}$, $\ket{\psi^l}$ is non-vanishing.
 Moreover, as each $Q_J^{p,q}$ contains only annihilation operators, we have
\[ Q_J^{p,q}\ket{\psi^l} = \pm c_{\beta_{N-l}}\dots c_{\beta_1}Q_J^{p,q}\ket{\psi} = 0, \]
where the sign depends on $N-l$, $p$, and whether particles are bosons or fermions. It follows that $\ket{\psi^l}$ is an $l$-particle zero-mode.
If we want to emphasize the partial root configuration $\check \beta$ from which $\ket{\psi^l}$ was generated via equation \eqref{eq:psil}, we will write $\ket{\psi^l_{\check \beta}}$ instead.

By the assumptions of the theorem, $Z$ must annihilate the root state of $\ket{\psi^l}$.
We are particularly interested in the situation where the $\beta$-indices in the definition of $Z$ correspond to the pattern $\{n^{Z}_j \}$ in the preceding discussion. To make this precise, 
we change the indices in equation \eqref{eq:Z} via
\begin{align} \label{eq:Z2}
  Z &= \sum_{r_{N-l+1},\dots r_N} \omega_{r_{N-l+1},\dots,r_N}\, c_{\beta_{N-l+1}} \dots c_{\beta_{N}}\,.
\end{align}
Now we consider the sequence $\beta^{{Z}}=(\beta_{N-l+1},\dots,\beta_N)$ in the above expression, and again assume that the underlying pattern $\{n^{Z}_j \}$ (determined solely by the $j$-values) is a partial root pattern of $\ket{\psi}$ whose complement we denote by $\{\check n_j\}$.
Indeed, this assumption can be made without loss of generality.
Assuming the converse, i.e., that the $\{n^{Z}_j\}$ associated with $Z$ do not yield a partial root pattern of $\ket{\psi}$, is equivalent to the assumption that no root pattern $\{n_j\}$ of $\ket{\psi}$ satisfies $n_j\geq n^{Z} _j$ for all $j$. In that case, $Z$ already annihilates every root configuration of $\ket{\psi}$ trivially, due to insufficient particle number on (some of) the orbitals with the $j$-indices appearing in $Z$.

We now proceed by showing that, with 
$\ket{\psi^l}$ defined as above, $Z$
actually annihilates the entire state $\ket{\psi^l}$, and not just its root state. The key stepping stone is to show that, as anticipated,  $\{n^{Z}_j\}$ is a root pattern of $\ket{\psi^l}$.

\begin{lemma}\label{lem1}
Let $\ket{\psi}$ be an $N$-particle zero mode of $H$, equation \eqref{eq:Hl}. If $\{\check n_j\}$ is a partial root pattern of $\ket{\psi}$ and $\check\beta=(\beta_1,\dotsc,\beta_{n-l})$ a corresponding partial root configuration with a complementary root configuration $\beta^Z$, then $\beta^Z$ is a root configuration of $\ket{\psi^l_{\check\beta}}$.
In particular, the pattern $\{n^Z_j\}$
of $\beta^Z$ is a root pattern of $\ket{\psi^l_{\check\beta}}$.
\end{lemma}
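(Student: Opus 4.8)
The plan is to establish the two ingredients that together make $\beta^Z$ a root configuration of $\ket{\psi^l_{\check\beta}}$: first, that $\beta^Z$ genuinely appears in the occupation-number expansion of $\ket{\psi^l_{\check\beta}}$ (this is essentially the observation already recorded above the lemma), and second --the substantive claim-- that its pattern $\{n^Z_j\}$ is maximal under dominance among all patterns occurring in $\ket{\psi^l_{\check\beta}}$. I would treat both via a single combinatorial observation about the annihilation string $c_{\beta_{N-l}}\dots c_{\beta_1}$ that produces $\ket{\psi^l_{\check\beta}}$ from $\ket{\psi}$ in Eq. \eqref{eq:psil}.

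The observation is that this string has no cancellations when read backwards: if a configuration $\ket{T}=c^\dagger_{\gamma_1}\dots c^\dagger_{\gamma_l}\ket{0}$, with pattern $\{m_j\}$, appears in $\ket{\psi^l_{\check\beta}}$ with nonzero coefficient, then there is a \emph{unique} configuration $\ket{U}$ with $c_{\beta_{N-l}}\dots c_{\beta_1}\ket{U}\propto\ket{T}$, namely the one whose orbital content is the multiset union of $\check\beta=(\beta_1,\dots,\beta_{N-l})$ and the orbitals of $\ket{T}$ (the occupancy of $\ket{U}$ at each orbital is fixed by requiring that the string remove exactly one particle per listed $\beta_i$). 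Because there is no interference, $\ket{U}$ itself must appear in $\ket{\psi}$, and its pattern is exactly $\{\check n_j + m_j\}$. Applying this with $\ket{T}$ the configuration $\beta^Z$ (so that $\ket{U}=\ket{S}$) immediately gives the first ingredient.

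For maximality, I would argue by contradiction: suppose some pattern $\{m_j\}$ of $\ket{\psi^l_{\check\beta}}$ satisfies $\{m_j\}>\{n^Z_j\}$. Realizing $\{m_j\}$ by a configuration $\ket{T}$ and lifting it to $\ket{U}$ as above produces a pattern $\{\check n_j + m_j\}$ of $\ket{\psi}$. The crux is then to promote the strict $l$-particle dominance $\{m_j\}>\{n^Z_j\}$ to $\{\check n_j + m_j\}>\{\check n_j + n^Z_j\}=\{n_j\}$; since $\{n_j\}$ is a root pattern of $\ket{\psi}$, this contradiction forces maximality of $\{n^Z_j\}$. The cleanest route to the promotion is the inward-squeezing characterization of dominance recalled in the text: $\{m_j\}>\{n^Z_j\}$ means $\{n^Z_j\}$ is reached from $\{m_j\}$ by a nonempty sequence of inward squeezes, and each such squeeze remains a valid inward squeeze after the fixed background $\{\check n_j\}$ is added (the squeezed orbitals still carry at least one particle, and total $J$ and particle number are conserved). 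Hence the same nonempty squeeze sequence takes $\{\check n_j + m_j\}$ to $\{\check n_j + n^Z_j\}=\{n_j\}$, and adding a fixed background is injective on patterns, so strictness is preserved.

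I expect this promotion step --that adding the fixed partial pattern $\{\check n_j\}$ preserves \emph{strict} dominance-- to be the main point requiring care, since dominance is only a partial order and one must check that the background neither collapses the comparison nor invalidates intermediate squeezes. The squeezing picture makes this transparent, but a direct verification from the partial-sum definition of the dominance relation is an equally viable, if more tedious, fallback.
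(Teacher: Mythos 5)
Your proposal is correct and takes essentially the same route as the paper's own proof: argue by contradiction, lift a strictly dominating pattern of $\ket{\psi^l_{\check\beta}}$ to a configuration of $\ket{\psi}$ with pattern $\{\check n_j + m_j\}$, and promote strict dominance under addition of the fixed background $\{\check n_j\}$ via the inward-squeezing characterization (the paper gives exactly this squeezing argument, plus an alternative justification via Macdonald's fact that dominance is preserved under unions of partitions). Your explicit ``no-cancellation'' uniqueness observation is the same fact the paper uses implicitly when asserting that a pattern appearing in $\ket{\psi^l_{\check\beta}}$ can only arise from the corresponding combined configuration appearing in $\ket{\psi}$.
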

\begin{proof}
We will prove this result by way of contradiction; suppose $\{n^Z_j\}$ is
not a root pattern of $\ket{\psi^l_{\check\beta}}$.
We have already established that it is a pattern of this state. For it not to be a root pattern, there must be another pattern $\{{n'}^Z_j\}$ of $\ket{\psi^l_{\check\beta}}$ that dominates  $\{{n}^Z_j\}$,  $\{{n'}^Z_j\}\geq \{{n}^Z_j\}$.
For  $\{{n'}^Z_j\}$ to be a pattern, there must be a state
\begin{equation}
    \ket{{\beta'}^Z} = c^\dagger_{\beta'_{N-l+1}}\dots c^\dagger_{\beta'_N} \ket{0}
\end{equation}
of pattern $\{{n'}^Z_j\}$ that appears in the expansion of $\ket{\psi^l_{\check\beta}}$ into configurations. This, in turn, is only possible if the state
\begin{equation}
  c^\dagger_{\beta_{1}}\dots c^\dagger_{\beta_{N-l}}   c^\dagger_{\beta'_{N-l+1}}\dots c^\dagger_{\beta'_N} \ket{0}
\end{equation}
appears in the full expansion into configurations of $\ket{\psi}$. This configuration clearly has the pattern $n'_j= \check n_j+ {n'}^Z_j$.
However, if $\{{n'}^Z_j\}\geq \{{n}^Z_j\}$, one immediately sees that
$\{{n'}_j\}\geq \{{n}_j\}$, where $n_j=\check n_j+n^Z_j$. 
For one, this follows easily from the fact that $\{n^Z_j\}$ is obtainable from  $\{n'^Z_j\}$ via inward squeezing processes, and the addition of $\{\check n_j\}$ to both does not change that. More generally, the addition of multiplicities $n_j=\check n_j+n^Z_j$
facilitates the ``union'' of the underlying partitions, $[j_i]= [\check j_i] \cup [j^Z_i]$, where the right hand side denotes the partition whose parts are those of $[\check j_i]$ and  $[j^Z_i]$ combined, after reordering. Similarly,  $[j'_i]= [\check j_i] \cup [j'^Z_i]$.  The statement then follows from the general fact that if $[\mu_i]$, $[\nu_i]$, $[\mu'_i]$, $[\nu'_i]$ are partitions with $[\mu'_i]\geq [\mu_i]$, $[\nu'_i]\geq [\nu_i]$, then also $[\mu'_i]\cup [\nu'_i]\geq [\mu_i]\cup [\nu_i]$ \cite{macdonald1998symmetric}. Note that $[\check j_i]\geq[\check j_i]$, as the dominance relation is reflexive.

Thus, since $\{{n'}_j\}\geq \{{n}_j\}$, $\{n_j\}$ is not a root pattern of $\ket{\psi}$, which contradicts the choices of $\{\check n_j\}$ and $\{n^Z_j\}$.
\end{proof}
The desired annihilation of $\ket{\psi^l}$ by $Z$ is then obtained as a corollary:
\begin{cor}\label{ZKillsL}
Let $\ket{\psi}$ be an $N$-particle zero mode of the Hamiltonian $H$, equation \eqref{eq:Hl}, and let $\{n^Z_j\}$ be the number of times $j$ appears among the $j_i$-indices in the operator $Z$ of Theorem \ref{rootthm}. Assume that $\{n^Z_j\}$ is a partial root pattern with complementary root pattern $\{\check n_j\}$. Let $\check \beta$ be a configuration with pattern $\{\check n_j\}$.
Then, under the assumptions of the Theorem,
\[Z \ket{\psi_{\check \beta}^l} = 0 \]
for the reduced state $\ket{\psi^l_{\check\beta}}$ of $\ket{\psi}$.
\end{cor}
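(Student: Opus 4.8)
The plan is to leverage the fact, established just before the statement, that $\ket{\psi^l_{\check\beta}}$ is itself an $l$-particle zero mode of $H$. The Theorem's hypothesis then immediately gives, for free, that $Z$ annihilates its root state, $Z\ket{\psi^l_{\check\beta}}_{\sf root}=0$. The entire content of the corollary is therefore to upgrade ``annihilates the root state'' to ``annihilates the full state,'' so the work lies in showing that $Z$ cannot possibly act nontrivially on the non-root part of $\ket{\psi^l_{\check\beta}}$.

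The key observation I would make first is a particle-counting one. The operator $Z$ is a string of exactly $l$ annihilation operators whose $j$-indices realize the pattern $\{n^Z_j\}$, so that $\sum_j n^Z_j=l$. Acting on any $l$-particle configuration $\ket{S'}$, $Z$ can give a nonzero result only if $\ket{S'}$ carries at least $n^Z_j$ particles on each orbital $j$; but since $\ket{S'}$ has exactly $l=\sum_j n^Z_j$ particles in total, this forces the occupancies to match exactly, i.e. $\ket{S'}$ must have pattern precisely $\{n^Z_j\}$. Hence, when acting on $l$-particle states, $Z = Z\,P_{\{n^Z_j\}}$, and in particular
\[ Z\ket{\psi^l_{\check\beta}} = Z\,P_{\{n^Z_j\}}\ket{\psi^l_{\check\beta}}. \]

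Next I would bring in Lemma \ref{lem1}, which guarantees that $\{n^Z_j\}$ is in fact a root pattern of $\ket{\psi^l_{\check\beta}}$. This means $P_{\{n^Z_j\}}$ occurs as one of the summands in the root projector $P_{\text{root}}$ for $\ket{\psi^l_{\check\beta}}$. Moreover, the same particle-counting argument shows that $Z$ kills $P_{\{n_j\}}\ket{\psi^l_{\check\beta}}$ for every other root pattern $\{n_j\}\neq\{n^Z_j\}$, since such a pattern also has total occupancy $l$ and therefore cannot satisfy $n_j\geq n^Z_j$ for all $j$ unless it coincides with $\{n^Z_j\}$. Summing the root-pattern contributions then yields $Z\ket{\psi^l_{\check\beta}}_{\sf root}=Z\,P_{\{n^Z_j\}}\ket{\psi^l_{\check\beta}}$.

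Putting these together, I would conclude $Z\ket{\psi^l_{\check\beta}} = Z\,P_{\{n^Z_j\}}\ket{\psi^l_{\check\beta}} = Z\ket{\psi^l_{\check\beta}}_{\sf root} = 0$, the last equality being exactly the hypothesis of Theorem \ref{rootthm}. The only step requiring genuine care — and the conceptual heart of the argument — is the particle-counting reduction $Z=Z\,P_{\{n^Z_j\}}$ on $l$-particle states; everything else is bookkeeping, and the nontrivial structural input, namely that $\{n^Z_j\}$ is actually a \emph{root} pattern rather than merely some pattern appearing in $\ket{\psi^l_{\check\beta}}$, has already been supplied by Lemma \ref{lem1}.
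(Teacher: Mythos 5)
Your core mechanism is sound, and it reproduces the first half of the paper's own proof in a cleaner algebraic package: the particle-counting identity $Z = Z\,P_{\{n^Z_j\}}$ on $l$-particle states, combined with Lemma \ref{lem1} and the hypothesis of Theorem \ref{rootthm}, is exactly how the paper disposes of the case it treats first (where it argues verbally that $Z$ kills the non-root ``rest''). But there is a genuine gap: you invoke Lemma \ref{lem1} unconditionally, and its hypotheses are not guaranteed by the corollary's assumptions. Lemma \ref{lem1} requires $\check\beta$ to be a \emph{partial root configuration} of $\ket{\psi}$, i.e., a configuration, with specific $r$-indices, that extends by some complementary root configuration to a root configuration actually appearing in the expansion of $\ket{\psi}$. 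The corollary only assumes that $\check\beta$ is \emph{some} configuration realizing the partial root \emph{pattern} $\{\check n_j\}$. In the multi-component setting ($n>1$) these are different notions: a configuration can carry the correct $j$-occupancies but the ``wrong'' $r$-indices, so that no extension of it appears in $\ket{\psi}$ at all --- this distinction between pattern and configuration is precisely what entangled Pauli principles hinge on. When $\check\beta$ fails to be a partial root configuration, Lemma \ref{lem1} says nothing, $\{n^Z_j\}$ need not be a root pattern of $\ket{\psi^l_{\check\beta}}$, and your identification of $Z\,P_{\{n^Z_j\}}\ket{\psi^l_{\check\beta}}$ with $Z\ket{\psi^l_{\check\beta}}_{\sf root}$ collapses. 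The paper's proof splits into exactly these two cases and devotes a separate argument to the second.

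The good news is that your own first step closes the missing case with almost no extra work. Suppose $P_{\{n^Z_j\}}\ket{\psi^l_{\check\beta}} \neq 0$. Then some configuration ${\beta'}^{Z}$ with pattern $\{n^Z_j\}$ appears in the expansion of $\ket{\psi^l_{\check\beta}}$, hence the $N$-particle configuration obtained by combining $\check\beta$ with ${\beta'}^{Z}$ appears in the expansion of $\ket{\psi}$; its pattern is $\check n_j + n^Z_j = n_j$, a root pattern, so this configuration is a root configuration of $\ket{\psi}$ and $\check\beta$ is, after all, a partial root configuration. Contrapositively, if $\check\beta$ is not a partial root configuration, then $P_{\{n^Z_j\}}\ket{\psi^l_{\check\beta}} = 0$, and your identity gives $Z\ket{\psi^l_{\check\beta}} = Z\,P_{\{n^Z_j\}}\ket{\psi^l_{\check\beta}} = 0$ directly --- which is, in substance, the paper's own argument for its second case. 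Once you make this dichotomy explicit (either $\check\beta$ is a partial root configuration, where your argument as written applies, or it is not, where the above applies), your proof is complete and matches the paper's in structure while stating the key counting step more crisply.
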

\begin{proof}
There are two cases to consider: either $\check \beta =( \beta_1,\dotsc, \beta_{N-l})$ is a partial root configuration of $\ket{\psi}$ or that it is not. If it is, we write 
\begin{equation}
  \ket{\psi^l_{\check\beta}}  =\ket{\psi^l_{\check\beta}}_\text{root}+\ket{\text{rest}}\,,
\end{equation}
where the rest consists of that part of the expansion of $\ket{\psi^l_{\check\beta}}$ into occupation number eigenstates that does not consist of root configurations.
By the assumptions of the theorem, $Z$ annihilates $\ket{\psi^l_{\check\beta}}_\text{root}$.
But $Z$ also annihilates $\ket{\text{rest}}$, because, by Lemma \ref{lem1}, the pattern of indices in $Z$, $\{n_j^Z\}$, is a root pattern of $\ket{\psi^l_{\check\beta}}$, and $\ket{\text{rest}}$ contains no such pattern.
If $\check \beta$ is not a partial root configuration 
of $\ket{\psi}$, using equation \eqref{eq:Z2}, consider
\begin{equation}\label{eq:Zpsi}
   Z\ket{\psi_{\check \beta}^l} = \sum_{r_{N-l+1},\dots r_N} \omega_{r_{N-l+1},\dots,r_N}\, c_{\beta_{N-l+1}} \dots c_{\beta_{N}}
   c_{\beta_{N-l}} \dots c_{\beta_1} \ket{\psi}\,.
\end{equation}
The indices $\beta_1\dots\beta_N$ on the right hand side define a pattern $n_j = \check n_j + n^Z_j$. $\check n_j$ and $n^Z_j$ being complementary partial root patterns, $n_j$ is by definition a root pattern. 
Then the ket 
\[
   c_{\beta_1}^\dagger \dots c_{\beta_{N-l}}^\dagger c_{\beta_{N-l+1}}^\dagger c_{\beta_{N}}^\dagger\ket{0}
\]
cannot appear in the configuration-expansion of $\ket{\psi}$ for any choice of $r_{N-l+1}\dots r_N$. That is so since the configuration $\beta=(\beta_1,\dots,\beta_N)$, whose pattern is the root pattern $\{n_j\}$, would then be a root configuration. Therefore, $\check \beta$ would be a partial root configuration, contrary to assumption. Thus, 
$ Z\ket{\psi_{\check \beta}^l} =0$ in this case also.

%By lemma \ref{lem1}, every configuration in $\ket{\psi^l}$ is a root configuration, so that
%\[ Z \ket{\psi^l} = Z P_\text{root.} \ket{\psi^l} = 0 \]
%by the definition of $Z$. 

% Let $\{p_j\}$ and $\{q_j\}$ be the patterns associated to $Z$ and $\ket{\psi^l}$ respectively. As noted before, $Z$ annihilates $\ket{\psi^l}$ trivially unless $p_j \leq q_j$ for all $j$. However, we also require
% \[ \sum_j p_j = \sum_j q_j = l, \]
% which can only be satisfied if $p_j = q_j$ for all $j$. We may then assume $Z$ and $\ket{\psi^l}$ have the same pattern. Finally, by lemma \ref{lem1} $\{q_j\}$ is a root pattern, making $\ket{\psi^l}$ a root state and implying 
% \[ Z \ket{\psi^l} = Z P_\text{root} \ket{\psi^l} = 0\]
% by the definition of $Z$.
\end{proof}

With this key fact established, we are nearly ready to essay the final calculation. To this end, we pose the following lemma. 
\begin{lemma}\label{NumberOpCommRel}
Let $Z$ be as in equation \eqref{eq:Z} 
and fix $j$, $k\geq0$ arbitrarily. 
Then
%\[ Z = c_{j_1,k_1} \dots c_{j_l k_{l}}, \]
%where $j_p \leq j_q$ for all $p < q$. Fix, $j$, and let $m$ denote the number of values of $i$ such that $j_i = j$. Then
\[ \sum_{r_1,\dots,r_k} c^\dagger_{j,r_k}\dots c^\dagger_{j,r_1} Z c_{j,r_1}\dots c_{j,r_k} = (-1)^{lkf}Z \prod_{s = 0}^{k-1} (\hat n_j - ( n^Z_j +s)), \]
where $f$ is $1$ for Fermions and $0$ for Bosons,
and $n^Z_j$ is the number of occurrences of $j$ among the indices in $Z$, as before, and $\hat n_j$ is the number operator, equation \eqref{eq:number}.
\end{lemma}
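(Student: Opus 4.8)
The plan is to induct on $k$, peeling off one creation--annihilation pair at a time. Introducing the superoperator $\Phi(X) := \sum_r c^\dagger_{j,r}\, X\, c_{j,r}$, I first observe that the nested left-hand side is nothing but $\Phi^k(Z)$, since it factorizes as $\sum_{r_1,\dots,r_k} c^\dagger_{j,r_k}\dots c^\dagger_{j,r_1} Z\, c_{j,r_1}\dots c_{j,r_k} = \sum_{r_k} c^\dagger_{j,r_k}\bigl(\Phi^{k-1}(Z)\bigr) c_{j,r_k}$. The base case $k=0$ is immediate (the empty product is $1$), so the whole statement reduces to understanding a single application of $\Phi$.

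Two elementary facts drive the calculation. First, $[\hat n_j, c_{j,r}] = -c_{j,r}$ gives $g(\hat n_j)\, c_{j,r} = c_{j,r}\, g(\hat n_j - 1)$ for any polynomial $g$, so one annihilation operator at orbital $j$ shifts the argument of any function of $\hat n_j$ downward by one. Second, because $Z$ contains exactly $n^Z_j$ annihilators at orbital $j$ and $\hat n_j$ acts as a derivation on products of ladder operators, $[\hat n_j, Z] = -n^Z_j\, Z$.

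The crux is the single-pair identity $\Phi(Z) = (-1)^{lf} Z\,(\hat n_j - n^Z_j)$, i.e.\ the case $k=1$. To establish it I would push $c^\dagger_{j,r}$ rightward through the string $c_{\beta_1}\dots c_{\beta_l}$ by repeated (anti)commutation. This yields a fully transported term $(-1)^{lf} c_{\beta_1}\dots c_{\beta_l}\, c^\dagger_{j,r}$, which, after multiplication by $c_{j,r}$ and summation over $r$, contributes $(-1)^{lf} Z\hat n_j$; and it yields $l$ contraction terms in which $c^\dagger_{j,r}$ pairs with a single $c_{\beta_i}$. A contraction survives the Kronecker delta $\delta_{(j,r),\beta_i}$ only when $j_i = j$, of which there are $n^Z_j$; transporting the leftover $c_{j,r_i}=c_{\beta_i}$ back into its original slot, the fermionic signs collect to $(-1)^{l-1}$ per term (all signs are trivial for bosons). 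Summing these $n^Z_j$ contractions gives $(-1)^{lf}(-n^Z_j) Z$, and combining with the transported term produces $(-1)^{lf}(Z\hat n_j - n^Z_j Z) = (-1)^{lf} Z(\hat n_j - n^Z_j)$.

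With these pieces the induction closes. By the inductive hypothesis $\Phi^{k-1}(Z) = (-1)^{l(k-1)f} Z\, P_{k-1}(\hat n_j)$ with $P_{k-1}(\hat n_j) = \prod_{s=0}^{k-2}(\hat n_j - (n^Z_j + s))$. I then commute the outer $c_{j,r_k}$ leftward through $P_{k-1}(\hat n_j)$, shifting its argument so that $s$ now runs over $1,\dots,k-1$, and apply the single-pair identity to the surviving $\sum_{r_k} c^\dagger_{j,r_k} Z\, c_{j,r_k}$; this supplies the missing factor $(\hat n_j - n^Z_j) = (\hat n_j - (n^Z_j + 0))$ together with one more sign $(-1)^{lf}$. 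The two factors, both polynomials in $\hat n_j$, merge into $\prod_{s=0}^{k-1}(\hat n_j - (n^Z_j + s))$, while the signs combine as $(-1)^{l(k-1)f}(-1)^{lf} = (-1)^{lkf}$, yielding the claim. I expect the main obstacle to be the fermionic sign accounting in the single-pair identity: one must confirm that each of the $n^Z_j$ contractions contributes exactly $(-1)^{l-1}$ once the leftover annihilator is returned to position $i$, since a miscounted transport would spoil both the overall prefactor $(-1)^{lf}$ and the sign of the $n^Z_j$ shift.
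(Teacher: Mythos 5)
Your proposal is correct and follows essentially the same route as the paper's proof: induction on $k$, with the single-pair identity $\sum_r c^\dagger_{j,r}\, Z\, c_{j,r} = (-1)^{lf}Z\,(\hat n_j - n^Z_j)$ obtained by transporting $c^\dagger_{j,r}$ through the annihilator string (one fully transported term giving $Z\hat n_j$, plus $n^Z_j$ contraction terms each contributing $-(-1)^{lf}$ once the leftover annihilator is returned to its slot), and the induction step closed via the shift identity $g(\hat n_j)\,c_{j,r} = c_{j,r}\,g(\hat n_j - 1)$. The only cosmetic difference is that the paper first invokes linearity to reorder each string so that all annihilators at orbital $j$ sit at the front, whereas you handle contractions at arbitrary positions $p$ with $j_p=j$ directly; the sign bookkeeping, $-(-1)^{pf}(-1)^{(l-p)f}=-(-1)^{lf}$ per contraction, is identical.
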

 
\begin{proof}
Our proof will follow from induction applied to $k$. 
%The base case must be handled for Bosons and Fermions separately. Let $r=1$, and consider the Bosonic case. 
For simplicity, we write $m$ for $n^Z_j$, and employ the notation
\[ c_{j,r_1} \overset{i}{\overset{\uparrow}{\dots}} c_{j,r_n} = \prod_{\underset{1 \leq s \leq n}{s \neq i}} c_{j,r_s} \]  
to refer to a product with one element removed.
By linearity, it is sufficient to prove the theorem for operators of the form
\[
   \mathfrak{Z} = c_{j,r_1}\dotsc c_{j,r_{m}} c_{j_{m+1},r_{m+1}}\dotsc c_{j_l,r_l}\,,
\]
instead of $Z$, with $j\neq j_i$ for $i>m$.
Further we use the fact that 
\[ 
c^\dagger_x c_{y_1}\dots c_{y_n} = 
  (-1)^{nf}c_{y_1}\dots c_{y_n}c_x^\dagger - 
  \sum_{p=1}^n (-1)^{pf} \, \delta_{x,y_p} 
  c_{y_1}\overset{c_{y_p}}
  {\overset{\uparrow}{\dots}} c_{y_n} 
\] 
obtained from standard canonical commutation relations. Then,  
\begin{align*}
    \sum_r c^\dagger_{j,r} \mathfrak{Z} c_{j,r} &= \sum_{r}c^\dagger_{j,r}\of{ c_{j,r_1}\dots c_{j,r_m}\dots c_{j_l,r_l}} c_{j,r}\\
    &= \sum_{r}\of{(-1)^{mf}c_{j,r_1}\dots c_{j,r_m}c^\dagger_{j,r} - \sum_{p=1}^m (-1)^{pf}\delta_{r,r_p} c_{j,r_1}\overset{p}{\overset{\uparrow}{\dots}} c_{j,r_m}}c_{j_{m+1},r_{m+1}}\dots c_{j_l,r_l}c_{j,r} \\
    &= (-1)^{mf}c_{j,r_1}\dots c_{j,r_m} \sum_r c^\dagger_{j,r} c_{j_{m+1},r_{m+1}} \dots c_{j_l,r_l} c_{j,r} \\
    & \ \ - (-1)^{pf+(l-p)f}\sum_{p=1}^m c_{j,r_1} \dots c_{j,r_{p-1}}\of{\sum_r \delta_{r,r_p} c_{j,r}}c_{j,r_{p+1}}\dots c_{j_l,r_l} \\
    &= (-1)^{lf}c_{j,r_1},\dots c_{j_l,r_l} \of{\sum_r c^\dagger_{j,r} c_{j,r}} - (-1)^{lf}\sum_{p=1}^m c_{j,r_1}\dots c_{j_l,r_l} \\
    &= (-1)^{lf}\mathfrak{Z}\of{\hat n_j - m}.
\end{align*}
This proves the base case. For the induction step, we also note that
\[
    \hat{n}_j c_{j,r} = c_{j,r}\of{\hat{n}_j - 1}
\]
for both bosons and fermions. 
Now suppose the result holds for some $k$, and compute
\begin{align*}
    \sum_{r_1,\dots,r_{k+1}}& c^\dagger_{j,r_{k+1}}\dots c^\dagger_{j,r_1} \mathfrak{Z} c_{j,r_1}\dots c_{j,r_{k+1}}  \\
    & =\sum_{r_{k+1}} c_{j,r_{k+1}}  \of{ \sum_{r_1,\dots,r_k} c^\dagger_{j,r_k}\dots c^\dagger_{j,r_1} \,\mathfrak{Z}\, c_{j,r_1}\dots c_{j,r_k}} c_{j,r_{k+1}} \\
    &= \sum_{r_{k+1}} c_{j,r_{k+1}} ^\dagger \of{(-1)^{lkf}\, \mathfrak{Z} \prod_{s = 0}^{k-1} (\hat n_j - (m+s))} c_{j,r_{k+1}} \\
    &= (-1)^{lkf}\sum_{r_{k+1}} c_{j,r_{k+1}}^\dagger\,\mathfrak{Z} \,c_{j,r_{k+1}}  \of{  \prod_{s = 0}^{k-1} (\hat n_j - (m+s+1))}  \\
    &= (-1)^{l(k+1)f}\,\mathfrak{Z}\,(\hat n_j -m) \of{ \prod_{s = 0}^{k-1} (\hat n_j - (m+s+1))}\\
    &= (-1)^{l(k+1)f}\,\mathfrak{Z}\,\prod_{s=0}^k \of{ \hat n_j - (m+s)}, 
\end{align*}
as desired. This completes the induction. 
\end{proof}

\begin{cor}
\label{NumOpProd}
The formula 
\[ \prod_{s=0}^{k-1} \of{\hat n_j - s} = \sum_{r_1,\dots,r_k} c^\dagger_{j,r_1} \dots c^\dagger_{j,r_k}c_{j,r_k} \dots c_{j,r_1}. \]
holds for both bosons and fermions. 
\end{cor}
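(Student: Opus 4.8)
The plan is to obtain this corollary as the degenerate $l=0$ special case of Lemma \ref{NumberOpCommRel}, in which the $l$-body operator $Z$ collapses to the identity. First I would set $Z=\mathbbm{1}$, so that $Z$ contains no annihilation operators at all: then $n^Z_j=0$ for every $j$, and the statistics-dependent prefactor reduces to $(-1)^{lkf}=(-1)^{0}=1$ irrespective of whether the particles are bosons or fermions. Substituting these values directly into the conclusion of Lemma \ref{NumberOpCommRel} yields
\[ \sum_{r_1,\dots,r_k} c^\dagger_{j,r_k}\dots c^\dagger_{j,r_1}\, c_{j,r_1}\dots c_{j,r_k} = \prod_{s=0}^{k-1}\of{\hat n_j - s}\,, \]
which is the asserted identity up to the ordering of the internal indices on the two operator strings.

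The only remaining step is to reconcile this ordering with the nested form stated in the corollary. I would relabel the dummy summation indices by the reversal $r_i \mapsto r_{k+1-i}$. Since every $r_i$ is summed over the same index set, this is merely a permutation of the summation and leaves the value unchanged; under it the descending creation string $c^\dagger_{j,r_k}\dots c^\dagger_{j,r_1}$ becomes the ascending string $c^\dagger_{j,r_1}\dots c^\dagger_{j,r_k}$, and the ascending annihilation string $c_{j,r_1}\dots c_{j,r_k}$ becomes the descending string $c_{j,r_k}\dots c_{j,r_1}$, producing exactly
\[ \sum_{r_1,\dots,r_k} c^\dagger_{j,r_1}\dots c^\dagger_{j,r_k}\, c_{j,r_k}\dots c_{j,r_1}\,. \]
No sign subtleties arise here, because I am renaming dummy labels rather than anticommuting operators past one another.

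If one prefers to avoid invoking the $l=0$ endpoint of the lemma explicitly, the same conclusion follows by re-running the short induction in the proof of Lemma \ref{NumberOpCommRel} with $\mathfrak{Z}=\mathbbm{1}$ from the outset; the base case then reduces to $\sum_r c^\dagger_{j,r} c_{j,r} = \hat n_j$, which is nothing but the definition \eqref{eq:number} of the number operator. I do not anticipate any genuine obstacle: the content of the corollary is entirely carried by Lemma \ref{NumberOpCommRel}, and the only point requiring a word of care is the harmless index-reversal that matches the two conventions for nesting the creation and annihilation operators.
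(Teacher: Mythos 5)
Your proposal is correct and is essentially identical to the paper's own proof, which simply applies Lemma \ref{NumberOpCommRel} with $Z \equiv 1$ (i.e., $l=0$, $n^Z_j=0$, trivial sign factor). Your extra care about the dummy-index reversal and the remark that one could rerun the induction with $\mathfrak{Z}=\mathbbm{1}$ are harmless elaborations of the same one-line argument.
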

\begin{proof}
This is an application of lemma \ref{NumberOpCommRel} with $Z \equiv 1$. 
\end{proof}
Having gathered our pieces, all that remains is putting them together. For now, let $\ket{\psi}$ be a zero mode of the Hamiltonian \eqref{eq:Hl} with well-defined particle number $N$.
It suffices to show $Z\,P_{\{n_j\}}\ket{\psi} = 0$ for all root patterns
$\{n_j\}$ of $\ket{\psi}$.
Let $\{n_j^Z\}$ be the pattern associated with $Z$ as in the above. If $\{n_j^Z\}$ is not a partial root pattern with $n_j\geq n_j^Z$ for all $j$, $Z\,P_{\{n_j\}}\ket{\psi} = 0$ trivially, for the reasons mentioned before Lemma \ref{lem1}. Thus, let $\{n_j^Z\}$ be a partial root pattern with $n_j\geq n_j^Z$, such that 
$\check n_j = n_j-n_j^Z$ is a complementary root pattern. 
 If now $\check \beta=(\beta_1,\dots,\beta_{N-l})$ is any  configuration with pattern $\check n_j$, Corollary \ref{ZKillsL} tells us that 
\begin{align*}
    0 = Z \ket{\psi^l_{\check \beta}}
        = Z c_{\beta_{N-l}} \dots c_{\beta_1} \ket{\psi}.
\end{align*}
Next, consider the expansion of $\ket{\psi}$ into occupation-number eigenstates,
\[ \ket{\psi} = \sum_{\ket{S}} \kappa_{\ket{S}} \ket{S}\,. \]
Here, all $\ket{S}$ are $N$-particle configurations.
By considering $Z\ket{\psi^l_{\check \beta}}$
in the form \eqref{eq:Zpsi}, 
we see that all strings of annihilation operators in this expression have indices belonging to the pattern $\{n_j\}$. Therefore,
all $\ket{S}$ having a different pattern are trivially annihilated by the expression.
The expression thus remains valid when $\ket{\psi}$ is replaced with $P_{\{n_j\}}\ket{\psi}$:
\begin{align*}
    0 
        = Z c_{\beta_{N-l}} \dots c_{\beta_1} P_{\{n_j\}}\ket{\psi}.
\end{align*}
Acting on the above with $c_{\beta_1}^\dagger \dots c_{\beta_{N-l}}^\dagger$, and summing over the $r_i$'s hidden in the $\beta_i$'s, we have
\[  0 = \sum_{r_1,\dots,r_{N-l}} c_{\beta_1}^\dagger \dots c_{\beta_{N-l}}^\dagger Z c_{\beta_{N-l}} \dots c_{\beta_1} P_{\{n_j\}} \ket{\psi}.  \]
Applying Lemma \ref{NumberOpCommRel}, recalling that $\check n_j $ is the number
of times the index $j$ occurs in $\check \beta$, and similarly that
$n_j^Z $ is the number of occurrences of $j$ among the indies in $Z$, we obtain
\begin{align} 0 =  Z \,\prod_{\substack {j\\\check n_j>0}} (-1)^{l\check n_j f}   \of{\prod_{s=0}^{\check n_j-1} \of{\hat n_{j}-(n^Z_j+s)}}  P_{\{n_j\}}\ket{\psi} \label{eq:Zn}\,.
\end{align}
 We can then make the replacement
\begin{equation} P_{\{n_j\}}\ket{\psi} \rightarrow \prod_{\substack {j\\\check n_j>0}} \frac{\hat n_{j} \dots \of{ \hat n_{j}- (n_j^Z-1)}}{ n_{j} \dots \of{ n_{j}- (n_j^Z-1)}}
   P_{\{n_j\}} \ket{\psi}\,,\label{eq:insert} \end{equation}
   as the above operator product acts as the identity within the range of $P_{\{n_j\}}$. In particular, since
   $n_j\geq n^Z_j$, there are no singularities introduces in the denominator.
 Using this in equation \eqref{eq:Zn}, and dropping the    the non-zero c-number from the denominators in equation \eqref{eq:insert}, as well as the factors of $-1$ in equation \eqref{eq:Zn}, we obtain
\begin{align} 0 =  Z \,\prod_{\substack {j\\\check n_j>0}}  \of{\prod_{s=0}^{n_j-1} \of{\hat n_{j}-s}}  P_{\{n_j\}}\ket{\psi} \,,
\end{align} 
where the inner products now go up to $s=n_j-1$, rather than $s=\check n_j-1$. With the help of Corollary \ref{NumOpProd}, we may rearrange this to read
\[ 0 = Z \sum_{k_1,\dots,k_N} c^\dagger_{\beta_1}\dots c^\dagger_{\beta_{N}}c_{\beta_N} \dots c_{\beta_1} P_{\{n_j\}}\ket{\psi}. 
   \]
Finally, using equation \eqref{eq:Pn}, dropping further positive factors:
\[ 0 = Z P_{\{n_j\}} P_{\{n_j\}}\ket{\psi}
=Z  P_{\{n_j\}}\ket{\psi}
, \]  
and since $\{n_j\}$ was chosen arbitrarily, we have 
\begin{equation} Z P_{\text{root}} \ket{\psi} =Z\ket{\psi}_{\sf root}= 0, \label{eq:final}\end{equation}
as well. 
To complete the proof, we finally consider a zero mode $\ket{\psi}$ that is not necessarily a particle number eigenstate.
We emphasized that $(\ket{\psi_1}+\dotsc + \ket{\psi_m})_{\sf root}$ is not necessarily equal to $(\ket{\psi_1})_{\sf root}+\dotsc + (\ket{\psi_m})_{\sf root}$. However, if the components $\ket{\psi_1},\dotsc, \ket{\psi_m}$ have disjoint decompositions into configurations, i.e., are each composed of mutually disjoint subsets of configurations, then equality trivially holds.
This is in particular the case when $\ket{\psi}=\ket{\psi_1}+\dotsc+ \ket{\psi_m}$ is the decomposition of $\ket{\psi}$ into particle number eigenstates $\ket{\psi_i}$. 
equation \eqref{eq:final} then trivially generalizes to such $\ket{\psi}$.
This completes the proof of the root theorem. 
\end{proof}
\section{Conclusions\label{conclusion}}

In this paper, we have proved a general theorem about the root states of zero modes of large classes of positive-definite $k$-body Hamiltonians typical of fractional quantum Hall and related systems.
The theorem streamlines a methodology established in the literature to make rigorous statements about zero-mode spaces
of such Hamiltonians, leading to 
powerful constraints whenever
these spaces are non-trivial (which may or may not be the case for arbitrary particle number $N$), i.e., whenever the Hamiltonian is frustration free.
This includes a variety of situations where the Landau-level guiding-center degrees of freedom may coexist with others, particularly dynamical momenta that render first-quantized many-body wave functions non-holomorphic. The theorem represents a generalization of the methodology to $k$-body operators, where, at least in the non-holomorphic case, only $2$-body operators seem to have been considered previously. Moreover, while the very notion of root-states is naturally second-quantized, our theorem makes it easy to carry out the analysis in a mixed ``first-second'' quantized manner, while only dealing with states of finite particle number, as we have demonstrated in some examples. 
In contrast, the method used so far in the literature on parent Hamiltonians for non-holomorphic states embraces a slightly more cumbersome approach, where Pauli-like constraints on root states are derived with the particle number treated as an arbitrary unknown from the outset, and constraints on root states are derived using the fully second-quantized operators \eqref{eq:Q}.
Note that the examples we discussed indicate that the second quantized form \eqref{eq:Q} may not be needed {
\em explicitly} at all.
We expect that the theorem we proved in this work will significantly simplify the workflow when studying related problems, especially for $k>2$-body interactions and associated Pauli-like principles, and/or when root patterns occur that have some of the $n_j$ greater than two. We are thus hopeful that the present work will lead to further exciting developments in the study of frustration-free models for fractional quantum Hall-like systems, especially for states with internal and/or multiple Landau-level degrees of freedom, such as parton states and their generalizations.%%%%%%%%%%%%%%%%%%%

\section*{Funding}This work has been
support by the National Science Foundation under Grant No. DMR-2029401.

\acknowledgments{
AS is indebted to
S. Bandyopadhyay, L. Chen,  B. Nachtergaele, Z. Nussinov,
G. Ortiz,  M. Tanhayi Ahari, S. Warzel, and A. Young
for insightful discussions.
AS is especially grateful to Technische Universit\"at M\"unchen and F. Pollmann in particular for their hospitality while some of this work has been carried out. }

\bibliography{main}

\end{document}